\setlist{noitemsep,leftmargin=\parindent,topsep=2pt}
\setlist{noitemsep,topsep=2pt}
\newcommand{\kibitz}[2]{\ifnum\Comments=1{\color{#1}{#2}}\fi}
\newcommand{\E}{\mathbb{E}}
\DeclareMathOperator*{\argmin}{arg\,min}
\newcommand{\1}{\mathbb{I}}
\newcommand{\Var}{\mathrm{Var}}
\newcommand{\Bias}{\mathrm{Bias}}
\newcommand{\eps}{\epsilon}
\newcommand{\R}{\mathbb{R}}
\newcommand{\B}{\mathcal{B}}
\newcommand{\F}{\mathcal{F}}
\newcommand{\PC}{\mathcal{P}_{\mathcal{C}}}
\newcommand{\PiC}{\Pi_{\mathcal{C}}}
\newcommand{\PP}{\mathbb{P}}
\renewcommand{\eps}{\varepsilon}
\theoremstyle{plain}
\newtheorem{theorem}{Theorem}[section]
\newtheorem{corollary}{Corollary}[section]
\newtheorem{lemma}[theorem]{Lemma}
\newtheorem{definition}[theorem]{Definition}
\begin{document}

\title{Reserve Price Optimization for First Price Auctions}

\author[a]{Zhe Feng\thanks{Supported by a NSF award CCF-1841550 and a Google PhD Fellowship. This work was done when the first author was an intern in Google Inc, NYC.}}
\author[b]{S\'ebastien Lahaie}
\author[b]{Jon Schneider}
\author[b]{Jinchao Ye}

\affil[a]{John A.~Paulson School of Engineering and Applied Sciences, Harvard University \authorcr \texttt{zhe\_feng@g.harvard.edu}}
\affil[b]{Google Inc, NYC \authorcr \texttt{slahaie, jschnei, jinchao@google.com}}

\date{June 28, 2020}

\maketitle

\begin{abstract}
The display advertising industry has recently transitioned from second- to first-price auctions as its primary mechanism for ad allocation and pricing. In light of this, publishers need to re-evaluate and optimize their auction parameters, notably reserve prices. In this paper, we propose a gradient-based algorithm to adaptively update and optimize reserve prices based on estimates of bidders' responsiveness to experimental shocks in reserves. Our key innovation is to draw on the inherent structure of the revenue objective in order to reduce the variance of gradient estimates and improve convergence rates in both theory and practice. We show that revenue in a first-price auction can be usefully decomposed into a \emph{demand} component and a \emph{bidding} component, and introduce techniques to reduce the variance of each component. We characterize the bias-variance trade-offs of these techniques and validate the performance of our proposed algorithm through experiments on synthetic data and real display ad auctions data from %
Google ad exchange.
\end{abstract}

\section{Introduction}\label{sec:intro}
A reserve price in an auction specifies a minimum acceptable winning bid, below which the item remains with the seller. The reserve price may correspond to some outside offer, or the value of the item to the seller itself, and more generally may be set to maximize expected revenue~\citep{Myerson1981}. In a data-rich environment like online advertising auctions it becomes possible to learn a revenue-optimal reserve price over time, and there is a substantial literature on optimizing reserve prices for \emph{second-price} auctions, which have been commonly used to allocate ad space~\citep{paes2016field,MM2016,MV2017}.

In this work we examine the problem of reserve price optimization in \emph{first-price} (i.e., pay-your-bid) auctions, motivated by the fact that all the major ad exchanges have recently transitioned to this auction format as their main ad allocation mechanism~\citep{digidayFirstprice,BiglerFP}. First-price auctions have grown in favor because they are considered more transparent, in the sense that there is no uncertainty in the final price upon winning~\citep{digidayTransparency}.\footnote{The full reasons for the transition are complex, and include the rise of ``header bidding''~\cite{digidayHeader}. A header bidding auction is a first-price auction usually triggered by code in a webpage header (hence the name).}
Unless restrictive assumptions are met, there is in theory no revenue ranking between first- and second-price auctions~\citep{krishna2009auction}, and there is no guarantee that reserve prices optimized for second-price auctions will continue to be effective in a first-price setting.%

From a learning standpoint the shift from second- to first-price auctions introduces several new challenges. In a second-price auction, truthful bidding is a dominant strategy no matter what the reserve. The bidders' value distributions are therefore readily available, and bids stay static (in principle) as the reserve is varied.
In a first-price auction, in contrast, bidders have an incentive to shade their values when placing their bids, and bid-shading strategies can vary by bidder. The gain from setting a reserve price now comes if (and only if) it induces higher bidding, so an understanding of bidder responsiveness becomes crucial to setting effective reserves.

Bid adjustments in response to a reserve price can occur at different timescales. If a bidder observes that it wins too few auctions because of the reserve price, it may increase its bid in the long-term (in a matter of hours up to weeks). Our focus here is on setting reserves prices by taking into account \emph{immediate} bidder responses to reserves. 
We assume that each bidder has a fixed, unknown bidding function $b(r, v)$ that depends on its private value $v$ and the observed auction reserve $r$. This agrees with practice in display ad auctions because the reserve $r$ is normally sent out in the 'bid request' message to potential bidders~\citep{OpenRTB}. To the extent that the bid function responds to $r$, first-price reserves can potentially show an immediate positive effect on revenue.
\vspace{-5pt}
\subsection*{Our Results}
\vspace{-5pt}
We propose a gradient-based approach to adaptively improve and optimize reserve prices, where we perturb current reserves upwards and downwards (e.g., by 10\%) on random slices of traffic to obtain gradient estimates. 

Our key innovation is to draw on the inherent structure of the revenue objective in order to reduce the variance of gradient estimates and improve convergence rates in both theory (e.g., see Corollary~\ref{cor:convergence-rate-bid-truncation}) and practice. We show that revenue in a first-price auction can be usefully decomposed into two terms: a \emph{demand curve} component which depends only on the bidder's value distribution; and a \emph{bidding} component whose variance can be reduced based on natural assumptions on bidding functions. 

A demand curve is a simpler, more structured object than the original revenue objective (e.g., it is downward-sloping), so the demand component lends itself to parametric modeling to reduce the variance.
We offer two variance reduction techniques for the bidding component\footnote{Variance reduction of the bidding component relies on the insight that bids far above the reserves are little affected by them (under natural bidding models), so these bids can be filtered out when computing gradient estimates---changes in such bids are likely due to noise rather than any effect of reserves.}, referred to as \emph{bid truncation} and \emph{quantile truncation}. Bid truncation can strictly decrease variance with no additional bias assuming the right bidding model, whereas quantile truncation may introduce bias but is less sensitive to assumptions on the bidding model.

We evaluate our approach over synthetic data where bidder values are drawn uniformly, and also over real bid distributions collected from the logs of the Google ad exchange with different bidder response models.
Our experimental results confirm that the combination of variance reduction on both objective components leads to the fastest convergence rate. %
For the demand component, a simple logistic model works well over the synthetic (i.e., uniform) data, but a flexible neural net is needed over the semi-synthetic data. For the bidding component, we find that quantile truncation is much more robust to assumptions on the bidding model.
\vspace{-5pt}
\subsection*{Related Work}
\vspace{-5pt}
This paper connects with the rich literature on \emph{reserve price optimization for auctions}, e.g., \citep{Myerson1981, riley1981optimal}. %
How to set optimal reserve prices in \emph{second price auctions} based on access to bidders' historical bid data has been an increasingly popular research direction in Machine Learning community, e.g.,~\citep{OS2011, MM2016, MV2017}. Another related line of work uses no-regret learning in second price auctions with partial information feedback to optimize reserve prices, e.g.,~\citep{Blum2004, CGM2015}. All of the works cited so far rely on the fact that the seller can directly learn the valuation distribution from historical bid data, since the second price auction is truthful. 

For first-price auctions, we have found little work on setting optimal reserves for asymmetric bidders, since there are no characterizations of equilibrium strategies for this case. Results are only available for limited environments, such as bidders with uniform valuation distributions~\citep{krishna2009auction, Matthews1995}. %
Recently, there has been a line of work regarding \emph{revenue optimization against strategic bidders in repeated auctions}, e.g.,~\citep{Amin13,Huang18}. In this paper, instead of assuming bidders act strategically, we assume each bidder has a fixed bidding function in response to reserves. This is a common assumption in large market settings and in the dynamic pricing literature \citep{Mao18}. 

The algorithms developed in this paper are related to the literature on \emph{online convex optimization with bandit feedback}~\citep{FKM2005, HL2014, ADX2010, AFH2011}.
However, there are two key differences with our work: (1) the revenue function in a first price auction is non-convex, and (2) the seller cannot obtain perfect revenue feedback under perturbed reserves with just a single query (i.e., auction)---the seller needs multiple queries to achieve accurate estimates with high confidence. Our algorithm is also related to \emph{zeroth-order stochastic gradient methods} \citep{GL2013, BG2018, Ghadimi2019, LLCHA2018}, which we discuss in detail later in Section~\ref{sec:method}.%
\section{Preliminaries}\label{sec:prelim}

We consider a setting where a seller repeatedly sells a single item to a set of $m$ bidders via a first price auction. In such an auction, the seller first sends out a reserve price $r$ to all bidders. Each bidder $i$ then submits a bid $b_i$.  The bidder with the highest bid larger than $r$ wins the item and pays their bid; if no bidder bids above $r$, the item goes unallocated. Note that the type of reserve price we consider in this work is \emph{anonymous} in the sense that each bidder sees the same reserve price.

Each bidder $i$ has a private valuation $v_i \in [0, 1]$ for the item, where each value $v_i$ is drawn independently (but not necessarily identically) from some unknown distribution.\footnote{This is without loss of generality, our analysis can easily be applied to any bounded valuation setting.} In a first-price auction, only the highest bid matters for both allocation and pricing. Thus, to simplify the notation, we write $v = \max_i v_i$ to denote the maximum value and $v$ is drawn i.i.d.\ from an unknown distribution $F$ across each auction. %
Our analysis from here on will refer to this `representative' highest bidder. (See Appendix \ref{sec:multi-to-one} for a rigorous justification of why we can reduce multiple bidders to a single bidder.)

We write $b(r, v)$ to denote the maximum bid when the reserve price is $r$ and the maximum value is $v$, and $\B(r)$ to denote the distribution of $b(r, v)$ for a fixed $r$ when $v$ is drawn according to $\F$. 
The main goal of the seller considered in this work is to learn the optimal reserve price $r \in [0,1]$ that maximizes expected revenue:
\begin{equation}\label{eq:main-objective}
\E_{v \sim \F} \left[b(r, v)\cdot \1\{b(r, v) \geq r\}\right].
\end{equation}
Note that there is no reason for a bidder to bid a positive value less than the reserve $r$: such a bid is guaranteed to lose. Therefore, without loss of generality we can assume that if $b(r, v) < r$, then $b(r, v) = 0$. This allows us to write the revenue simply as:
$$\mu(r) = \E_{b \sim \B(r)} \left[ b \right] = \E_{v \sim \F} \left[b(r, v)\right].$$
In this paper, we focus on maximizing the revenue function $\mu(r)$ in the steady state. %

\vspace{-5pt}
\subsection*{Response Models}\label{sec:response-model}
\vspace{-5pt}

We begin by describing some general properties of bidding functions that hold for any utility-maximizing bidders (see \citep{Matthews1995} for further discussion). 

\begin{definition}
\label{def:fp-property}
A \emph{bidding function} $b(r, v)$ satisfies the following properties: 1) $b(r, v) \leq v$ for all $v$; 2) $b(r, v) \geq r$ for $v \geq r$; 3) $b(r, v) = 0$ for $v < r$; 4) $b(r, v)$ is non-decreasing in $v$ for all $r$. %
\end{definition}

In some of our algorithms, we would like to impose additional constraints on the response model which, while not a consequence of utility-maximizing behavior, are likely to hold in practice. One such constraint is the \textit{diminishing sensitivity in value of bid to reserve}. This says that bidders with a larger value will change their bid less in response to a change in reserves.

\begin{definition}[Diminishing sensitivity of bid to reserve]
If $v_H > v_L \geq r$, then for $\delta > 0$ and $v_L \geq r+\delta$ we have $b(r + \delta, v_H) - b(r, v_H) \leq b(r + \delta, v_L) - b(r, v_L)$.

\end{definition}

One natural and concrete example of a response model is a bidder that increases its bid to the reserve as long as the reserve is below its value. We refer to this as the \emph{perfect response model}, formally defined as follows.

\begin{definition}
\label{def:perfect-response-model}
A \emph{perfect response} bidding function takes the form:
$$
b(r, v) = \left\{
\begin{array}{cl}
b(0, v) & \mbox{if $b(0, v) \geq r$}  \\
r & \mbox{if $b(0, v) < r \leq v$} \\
0  & \mbox{if $v < r$} \\
\end{array}
\right.
$$
\end{definition}
\noindent
Note that the perfect response model is based on the \emph{original bid} of the bidder under reserve price $0$, namely $b(0, v)$. If $b(0, v)$ is already above the reserve, then this bidder is unaffected by the reserve. %
Note that the perfect response model satisfies the diminishing sensitivity property.

In practice, bidders are unlikely to exactly follow the perfect response model; for example, bidders will often increase their bid to some amount strictly above the reserve $r$ so as to remain competitive with other bidders. For this reason, we propose a relaxation of the perfect response model which we call the \emph{$\eps$-bounded response model}: the bid is at most $\eps$ greater than what it would have been under the perfect response model if $b(0, v) < r \leq  v$ (see also Definition~\ref{def:eps-bounded-response-model}). Note that the $\eps$-bounded response model becomes the perfect response model when $\eps = 0$.

\vspace{-5pt}
\section{Gradient Descent Framework}
\label{sec:method}
\vspace{-5pt}

The first-price auction setting introduces several challenges for setting reserve prices. First, the seller cannot observe true bidder values because truthful bidding is not a dominant strategy in a first-price auction. Second, how the bidders will react to different reserves is unknown to the seller---the only information that the seller receives is bids drawn from distribution $\B(r)$ when the seller sets a reserve price $r$.

One natural idea, and the approach we take in this paper, is to optimize the reserve price via gradient descent. Gradient descent is only guaranteed to converge to the optimal reserve when our objective is convex (or at least, unimodal), which is not necessarily true for an arbitrary revenue function. However, gradient descent has a number of practical advantages for reserve price optimization, including:
\begin{enumerate}
\item  Gradient descent allows us to incorporate prior information we may have about the location of a good reserve price (possibly significantly reducing the overall search cost).
\item The adaptivity of gradient descent allows us to quickly converge to a local optimum and follow this optimum if it changes over time, significantly saving on search cost (over global methods such as grid search). 
\item In practice, many revenue curves have a unique local optimum (see Section \ref{sec:experiment}), so gradient descent is likely to converge to the optimal reserve.
\end{enumerate}

More specifically, since the seller has no direct access to the gradients (i.e, first-order information) of $\mu(r)$, we consider approaches that fit in the framework of \textit{zeroth-order stochastic optimization}. %
Our framework, summarized in~Algorithm \ref{alg:OSGD-revenue-general}, proceeds in rounds. In round $t$ where the current reserve is $r_t$, the seller selects a perturbation size $\beta_t$ and randomly sets the reserve price to either $(1 + \beta_t)r_t$ or $(1-\beta_t)r_t$ on separate slices of experiment traffic, until it has received $n_t$ samples from both $\B((1+\beta_t)r_t)$ and $\B((1-\beta_t)r_t)$. The seller then uses these $2n_t$ samples to estimate the gradient $\hat{G}_t$ of the revenue curve $\mu(r)$ at $r_t$ and updates the reserve price based on this gradient estimate using learning rate (step size) $\alpha_t$.  %

We assume that we have access to a fixed total number of samples $N=\sum_{t=1}^T n_t$ (the number of iterations $T$ is a variable that will be fixed later). There is then a trade-off between $n_t$ (i.e, the number of samples per iteration) and $T$ (the number of iterations available to optimize the reserve price).

\begin{algorithm}[tb]
\caption{Zeroth-order stochastic projected gradient framework for reserve optimization.} \label{alg:OSGD-revenue-general}
\begin{algorithmic}
\STATE {\bf Input:} Initial reserve $r_1 \in (0,1)$, total number of iterations $T$ (a variable to be fixed later).
\STATE {\bf Output:} Reserve prices $r_2, r_3,\dots, r_{T+1}$.
\FOR{$t=1,2,\dots,T$}
\STATE Set a reserve price of $r^{+}_t = (1+\beta_t)r_{t}$ in $n_t$ %
auctions.
\STATE
Set a reserve price of $r^{-}_t = (1-\beta_t)r_{t}$ in $n_t$ %
auctions.
\STATE
Construct an estimate $\hat{G}_t$ of the gradient of revenue at $r_{t}$, based on the feedback of experiments.
\STATE Update reserve: $r_{t+1} = \Pi(r_t + \alpha_t \hat{G}_t)$, where
$$\Pi(x) = \argmin_{z\in(0, 1)}|z-x|.$$
\vspace{-10pt}
\ENDFOR
\end{algorithmic}
\end{algorithm}

Zeroth-order stochastic gradient descent is a well-studied problem~\citep{GL2013, BG2018, Ghadimi2019, LLCHA2018}. 
In this paper, we focus on taking advantage of the structure of $b(r, v)$ to construct good \emph{discrete gradient} estimates $\hat{G}_t$, as this aspect is specific to the problem of reserve price optimization. Specifically, we tackle the following problem which we term the \textit{discrete gradient problem}:

\begin{itemize}
\item \textbf{Input: } $n$ samples $X_1^+,\cdots, X_n^+$ drawn i.i.d from $\B(r^{+})$ and $n$ samples $X_1^,\cdots, X_n^-$ drawn i.i.d from $\B(r^{-})$, for known $r^{+} > r^{-}$.
\item \textbf{Output: } An estimator $\hat{G}$ %
for the discrete derivative $(\mu(r^{+}) - \mu(r^{-}))/(r^{+} - r^{-})$. This estimator has bias $\Bias(\hat{G})$ and variance $\Var(\hat{G})$, where
$\Bias(\hat{G}) = \left|\E[\hat{G}] - \frac{\mu(r^{+}) - \mu(r^{-})}{r^{+} - r^{-}}\right|$.
\end{itemize}

Solutions to the discrete gradient problem with small bias and variance directly translate into faster convergence rates for our gradient descent. %
We provide a detailed convergence result %
in Theorem~\ref{thm:convergence-rate} in Appendix~\ref{app:convergence-rate-proof}. We summarize this result informally as follows.

\begin{theorem}[Informal Restatement of Theorem \ref{thm:convergence-rate}]\label{thm:informal-convergence-rate}
If for all $t$, $\Bias(\hat{G}_t) \leq B$ and $\Var(\hat{G}_t) \leq V$ then for optimal choices of $\alpha_t$ and $n_t$ (and fixing $\beta_t = \delta/2r_t$), Algorithm \ref{alg:OSGD-revenue-general} satisfies
$$\min_{t\in[T]} |\PC^t|^2 = \tilde{\mathcal{O}}\left(T^{-1/2} + \delta^2 + B^2 + V + (T/N)^2\right).$$
Here $\PC^t$ can be thought of as the true gradient at round $t$ (see Definition \ref{def:gradient_mapping} in Appendix). 
\end{theorem}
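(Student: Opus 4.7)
The plan is to treat $\hat{G}_t$ as a biased, noisy estimator of the true first-order gradient $\mu'(r_t)$ and then run a standard Lyapunov-style analysis for projected, non-convex, zeroth-order stochastic gradient ascent. The rate should decompose additively into contributions from (a) optimization error in SGD, (b) the finite-difference discretization, (c) the given oracle bias $B$ and variance $V$, and (d) the interplay between the per-iteration sample budget $n_t$ and the total budget $N$.

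The first step is a regularity argument. I would show that $\mu(r) = \E_{v\sim\F}[b(r,v)]$ is Lipschitz-smooth on $[0,1]$, say with constant $L$. Because bids lie in $[0,1]$ and $v$ is drawn from a fixed distribution $\F$, taking expectation over $v$ typically smooths the kinks that $b(r,v)$ carries in, e.g., the perfect response model. This makes $\mu'$ well-defined and Lipschitz, so that the symmetric finite difference satisfies $(\mu(r_t^+) - \mu(r_t^-))/(r_t^+ - r_t^-) = \mu'(r_t) + O(\delta)$, where $r_t^+ - r_t^- = \delta$ by the choice $\beta_t = \delta/(2r_t)$. Combining with the assumed oracle bound, $\E[\hat{G}_t] = \mu'(r_t) + e_t$ with $|e_t| = O(B + \delta)$ and $\Var(\hat{G}_t) \leq V$.

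Next I would invoke the smoothness descent lemma for the projected update $r_{t+1} = \Pi(r_t + \alpha_t \hat{G}_t)$, expressed in terms of the gradient mapping $\PC^t$ of Definition~\ref{def:gradient_mapping}. Taking expectations, using the split $\E[\hat{G}_t^2] \leq V + 2\mu'(r_t)^2 + 2 e_t^2$, and telescoping over $t \in [T]$ (with $\mu$ bounded by $1$ on $[0,1]$) gives a bound of the form
\[
\min_{t \in [T]} \E\bigl[|\PC^t|^2\bigr] \;\leq\; \frac{C_1}{\alpha T} + C_2\, \alpha\, (V + e_{\max}^2) + C_3\, e_{\max}^2.
\]
Choosing $\alpha = \Theta(T^{-1/2})$ balances the first two terms to yield the $T^{-1/2}$ and $V$ contributions, while $e_{\max}^2 = O(B^2 + \delta^2)$ produces the $B^2$ and $\delta^2$ terms; the $\tilde{\mathcal{O}}$ absorbs $L$ and logarithmic factors.

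The final $(T/N)^2$ term should emerge from the hard budget $\sum_t n_t = N$ together with the feasibility requirement $r_t^{\pm} \in (0,1)$, i.e.\ $r_t \in [\delta/2,\, 1-\delta/2]$. The per-iteration variance of $\hat{G}_t$ after averaging $n_t$ samples is $\Theta(1/n_t) = \Theta(T/N)$ in the equal-allocation case, and when this residual noise enters quadratically through the $\alpha_t^2\,\E[\hat{G}_t^2]$ piece of the descent lemma (or through a boundary-projection correction that scales like $1/n_t$), it surfaces as $(T/N)^2$ in the final rate. I expect two main obstacles: (i) rigorously establishing smoothness of $\mu$ under realistic response models where $b(r,v)$ has kinks, which I would handle via continuity assumptions on $\F$; and (ii) pinpointing where $n_t$ enters quadratically rather than linearly, which requires careful bookkeeping in the projected descent step and couples the budget constraint tightly to the final rate.
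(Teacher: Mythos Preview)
Your skeleton---smoothness descent lemma for the projected update, telescoping, and the step-size choice $\alpha_t=\Theta(T^{-1/2})$---matches the paper's proof of the formal Theorem~\ref{thm:convergence-rate}. Two points diverge.

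First, the paper \emph{assumes} $L$-smoothness of $\mu$ rather than proving it; your plan to derive it from the structure of $b(r,v)$ is extra work that the paper simply does not do (and that would in fact fail for, e.g., perfect response unless $\F$ has a density).

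Second, and more importantly, your account of the $(T/N)^2$ term is off. In the paper the analysis is \emph{high-probability}, not in expectation: Lemma~\ref{lem:bernstein-ie} applies Bernstein's inequality to $\hat{G}_t=\frac{1}{n_t}\sum_i \hat{G}_{i,t}$ with $|\hat{G}_{i,t}|\le C$ a.s., giving
\[
|\hat{G}_t-\E[\hat{G}_t]|\;\le\;\frac{2C\ln(2T/\delta)}{3n_t}+\sqrt{2V\ln(2T/\delta)}
\]
with probability $1-\delta/T$. With $n_t=N/T$ the first (sub-exponential) term is $\Theta(CT/N)$, and because the deviation enters the descent inequality as $|\mu'(r_t)-\hat{G}_t|^2$, it becomes $(T/N)^2$ in the final rate. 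It is \emph{not} a variance effect: the hypothesis $\Var(\hat{G}_t)\le V$ already refers to the averaged estimator, so all sample-size dependence through variance is absorbed into $V$. Your proposed mechanisms---residual per-sample variance $\Theta(1/n_t)$ entering through $\alpha_t^2\E[\hat{G}_t^2]$, or a boundary-projection correction---do not produce this term; a pure second-moment analysis would yield only $T^{-1/2}+\delta^2+B^2+V$ and miss the $(T/N)^2$ contribution entirely. To recover it you need the almost-sure bound $|\hat{G}_{i,t}|\le C$ (stated in the formal theorem but suppressed in the informal one) and a Bernstein-type concentration step.
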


Intuitively, we want to design an estimator and choose our parameters $\alpha_t, \beta_t, n_t$, so as to trade off between $\delta$, $B$, and $V$. In the following sections, we show how to do this for a variety of bidder response models.

\vspace{-5pt}
\subsection*{Naive Gradient Estimation}
\label{sec:naive-gradient-descent}
\vspace{-5pt}

The simplest method for estimating the discrete gradient is to take the difference between the average revenue from bids from $\B(r^{+})$ and the average revenue from bids from $\B(r^{-})$.
More formally, %
we compute discrete gradient as,
{\small
\begin{eqnarray}\label{eq:naive-gd}
\hat{G} = \frac{\sum_{i = 1}^{n} X^{+}_i - \sum_{i = 1}^{n} X^{-}_i}{n(r^{+} - r^{-})}.
\end{eqnarray}
We show that $\hat{G}$ has the following properties.
}
\begin{theorem}\label{thm:discrete-gradient-revenue}
Assume that $r^{+} - r^{-} = \delta$, then %
$\Bias(\hat{G}) = 0, \Var(\hat{G}) \leq \frac{1}{2\delta^2 n}$.
\end{theorem}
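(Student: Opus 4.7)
The plan is to prove the two claims separately, using only linearity of expectation, the independence of the two samples, and the fact that bids lie in $[0,1]$.

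For the bias claim, I would observe that since each $X_i^+$ is drawn i.i.d.\ from $\B(r^+)$, we have $\E[X_i^+] = \E_{b \sim \B(r^+)}[b] = \mu(r^+)$ by definition of $\mu$. An identical argument gives $\E[X_i^-] = \mu(r^-)$. Then linearity of expectation applied to the definition of $\hat{G}$ in \eqref{eq:naive-gd} yields
\[
\E[\hat G] \;=\; \frac{n\mu(r^+) - n\mu(r^-)}{n(r^+ - r^-)} \;=\; \frac{\mu(r^+) - \mu(r^-)}{r^+ - r^-},
\]
which is exactly the quantity we are trying to estimate, so $\Bias(\hat G) = 0$.

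For the variance claim, I would first use independence of the $2n$ samples (the $+$ and $-$ batches come from disjoint slices of traffic, and within each batch the $X_i$ are i.i.d.) to write
\[
\Var(\hat G) \;=\; \frac{1}{n^2(r^+ - r^-)^2}\Bigl(n\Var(X_1^+) + n\Var(X_1^-)\Bigr) \;=\; \frac{\Var(X_1^+) + \Var(X_1^-)}{n\delta^2}.
\]
The remaining step is to bound each per-sample variance by $1/4$. This follows because $b(r,v) \in [0,v] \subseteq [0,1]$ by Definition~\ref{def:fp-property}, so $X_1^+, X_1^- \in [0,1]$; and any random variable supported in $[0,1]$ has variance at most $1/4$ (Popoviciu's inequality, with equality for a $\mathrm{Bernoulli}(1/2)$). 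Substituting yields $\Var(\hat G) \le \frac{1/2}{n\delta^2} = \frac{1}{2\delta^2 n}$.

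There is no real obstacle here: the calculation is routine once one notices that the two batches are independent and that bids are bounded in $[0,1]$. The only minor point worth flagging is the appeal to the $[0,1]$ bound on bids, which relies on the normalization $v \in [0,1]$ and property (1) of Definition~\ref{def:fp-property}; without bounded bids, one could only express the variance in terms of $\Var_{b \sim \B(r^\pm)}(b)$ rather than in the clean form stated in the theorem.
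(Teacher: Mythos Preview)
Your proposal is correct and follows essentially the same approach as the paper: both use linearity of expectation together with $\E[X_i^{\pm}] = \mu(r^{\pm})$ for unbiasedness, then independence plus the $1/4$ variance bound for $[0,1]$-valued random variables to get $\Var(\hat G) \le \frac{2\cdot(1/4)}{n\delta^2}$. Your write-up is in fact a bit more explicit than the paper's (naming Popoviciu's inequality and tracing the $[0,1]$ bound back to Definition~\ref{def:fp-property}), but the argument is the same.
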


This leads to the following convergence rate via Theorem \ref{thm:informal-convergence-rate}. %

\begin{corollary}\label{cor:convergence-rate-naive-estimator}
Using this estimator $\hat{G}$, and setting $T = N^{1/2}$ and $\delta = \Theta(N^{-1/8})$, Algorithm \ref{alg:OSGD-revenue-general} achieves convergence,
$\min_{t\in[T]} |\PC^t|^2 \leq \widetilde{\mathcal{O}}\left(N^{-1/4}\right).$
\end{corollary}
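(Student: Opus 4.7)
The plan is to derive this as a direct consequence of the informal convergence result in Theorem~\ref{thm:informal-convergence-rate} combined with the bias/variance characterization from Theorem~\ref{thm:discrete-gradient-revenue}. First I would note that since each iteration consumes $2n_t$ samples and the total budget is $N$, a uniform split gives $n_t = n = \Theta(N/T)$ per iteration. Substituting $r^+ - r^- = 2\beta_t r_t = \delta$ into Theorem~\ref{thm:discrete-gradient-revenue}, the naive estimator has $B = 0$ and
\[
V \;\leq\; \frac{1}{2\delta^2 n} \;=\; \Theta\!\left(\frac{T}{\delta^2 N}\right).
\]

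Next I would plug $B = 0$ and this $V$ into the convergence bound of Theorem~\ref{thm:informal-convergence-rate} to obtain
\[
\min_{t \in [T]} |\PC^t|^2 \;=\; \widetilde{\mathcal{O}}\!\left(T^{-1/2} + \delta^2 + \frac{T}{\delta^2 N} + \left(\frac{T}{N}\right)^{2}\right).
\]
Then I would verify the prescribed parameter choices $T = N^{1/2}$ and $\delta = \Theta(N^{-1/8})$ term-by-term: $T^{-1/2} = N^{-1/4}$; $\delta^2 = N^{-1/4}$; $T/(\delta^2 N) = N^{1/2} / (N^{-1/4} \cdot N) = N^{-1/4}$; and $(T/N)^2 = N^{-1}$. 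The maximum is $N^{-1/4}$ (up to logarithmic factors), which yields the claimed rate. One can also double-check that these choices are essentially the optimum by balancing the first three terms: setting $T^{-1/2} \asymp \delta^2$ gives $\delta \asymp T^{-1/4}$, and then balancing $\delta^2 \asymp T/(\delta^2 N)$ gives $T \asymp N^{1/2}$ and hence $\delta \asymp N^{-1/8}$.

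The only mildly subtle step is the bookkeeping that the variance bound from Theorem~\ref{thm:discrete-gradient-revenue} is phrased in terms of $n$, while the convergence bound is phrased in terms of $N$ and $T$; once this conversion is made the proof reduces to arithmetic. Since $B = 0$ for the naive estimator, there is no bias-variance tradeoff to negotiate here, which is what makes this corollary essentially a direct calculation rather than a substantive proof.
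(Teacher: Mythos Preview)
Your proposal is correct and mirrors exactly what the paper does: it treats the corollary as an immediate plug-in of the bias/variance bounds from Theorem~\ref{thm:discrete-gradient-revenue} (namely $B=0$, $V \leq 1/(2\delta^2 n)$ with $n = \Theta(N/T)$) into the convergence statement of Theorem~\ref{thm:informal-convergence-rate}, followed by the same term-by-term balancing of $T^{-1/2}$, $\delta^2$, $T/(\delta^2 N)$, and $(T/N)^2$. The paper itself gives no further argument beyond ``This leads to the following convergence rate via Theorem~\ref{thm:informal-convergence-rate},'' so your write-up is if anything more explicit than the original.
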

Although there are no matching lower bounds, this is the best known asymptotic convergence rate for zeroth-order optimization over a non-convex objective~\citep{GL2013, BG2018}.
The naive gradient estimation approach has the advantage that it works regardless of response model, is simple to compute (it uses only revenue information and not individual bids), and leads to an \textit{unbiased} estimator for the discrete derivative. The disadvantage is that the variance of this estimator can be large (especially as we take $\delta$ small). In the following section, we show how to address this by taking into account the inherent structure of the revenue objective based on an underlying bidder response model.

\vspace{-5pt}
\section{Variance Reduced Gradient Estimation}
\label{sec:gradient-descent-decomposed-revenue}
\label{sec:revenue-decomposition}
\vspace{-5pt}

In this section, we first introduce another representation of the revenue formula by decomposing it into a \emph{demand} component and a \emph{bidding} component. We then propose techniques to reduce the variance of the discrete gradient of each component.

\subsection{Revenue Decomposition}
We can decompose the revenue $\mu(r)$ in the following way.

\begin{theorem}\label{thm:revenue-decomposition}
We have that
\begin{equation}\label{eq:revenue-decomposition}
\mu(r) = \E_{v\sim \F}[\max(b(r, v) - r, 0)] + r \Pr_{v \sim \F}[v \geq r].
\end{equation}
\end{theorem}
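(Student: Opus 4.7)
The plan is to observe that the decomposition is really just a careful case analysis on whether $v \geq r$ or $v < r$, using the three structural properties of $b(r,v)$ already encoded in the preliminaries (namely $b(r,v) \geq r$ when $v \geq r$, and $b(r,v) = 0$ when $v < r$, the latter being a WLOG convention established right before the definition of $\mu(r)$).

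First I would start from the definition $\mu(r) = \E_{v \sim \F}[b(r,v)]$ and split the expectation according to the indicator $\mathbb{1}\{v \geq r\}$, writing
\begin{equation*}
\mu(r) = \E_{v \sim \F}\bigl[b(r,v)\,\mathbb{1}\{v \geq r\}\bigr] + \E_{v \sim \F}\bigl[b(r,v)\,\mathbb{1}\{v < r\}\bigr].
\end{equation*}
The second term vanishes because the convention $b(r,v) = 0$ whenever $v < r$ is baked into the problem setup. For the first term I would add and subtract $r$ inside the bracket, obtaining
\begin{equation*}
\E\bigl[b(r,v)\,\mathbb{1}\{v \geq r\}\bigr] = r\,\Pr[v \geq r] + \E\bigl[(b(r,v) - r)\,\mathbb{1}\{v \geq r\}\bigr].
\end{equation*}

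The final step is to recognize that the residual expectation equals $\E[\max(b(r,v) - r, 0)]$. On $\{v \geq r\}$, property (2) of Definition~\ref{def:fp-property} gives $b(r,v) \geq r$, so $b(r,v) - r = \max(b(r,v) - r, 0)$. On $\{v < r\}$, the convention $b(r,v) = 0$ yields $\max(b(r,v) - r, 0) = \max(-r, 0) = 0$ (assuming $r \geq 0$, which is the case throughout since $r \in [0,1]$), so the indicator can be dropped. Putting the two pieces together gives the claimed identity.

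There is essentially no obstacle here: the result is a one-line algebraic rearrangement once one uses the bidding-function conventions. The only thing to be careful about is making the two sign conditions ($b(r,v) \geq r$ above the reserve, $b(r,v) = 0$ below it) explicit so that the $\max(\cdot,0)$ is justified on both pieces of the partition; everything else is linearity of expectation.
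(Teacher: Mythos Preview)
Your proposal is correct and is essentially the same argument as the paper's: both rely on the dichotomy that $b(r,v)=0$ when $v<r$ and $b(r,v)\geq r$ when $v\geq r$, and differ only cosmetically in that the paper first rewrites $b(r,v)=\max(b(r,v),r)-r\,\mathbb{1}\{b(r,v)=0\}$ before splitting, whereas you partition on $\{v\geq r\}$ from the outset.
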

\vspace{-10pt}

Define $E(r) = \E_{v \sim \F}[\max(b(r, v) - r, 0)]$ and $D(r) = \Pr_{v \sim \F}[v \geq r]$, so that $\mu(r) = E(r) + rD(r)$. These two terms capture two different aspects of bidder behavior which contribute to revenue. The function $D(r)$ amounts to a ``demand curve'' which gives the proportion of values that clear the reserve $r$, and therefore the proportion of auctions that are bid on at $r$. If the auction were just a simple posted-price auction (i.e., the winner is charged the quoted price $r$), then the \emph{demand} component $r D(r)$ would be the associated revenue.
However, in a first-price auction the winning bidder pays its bid, not the reserve. Therefore the \emph{bidding} component $E(r)$ captures the excess contribution from bids greater than the reserve.

To construct a good estimator $\hat{G}$ for the discrete gradient of $\mu(r)$, it suffices to construct good estimators $\hat{G}_{E}$ and $\hat{G}_{D}$ for the discrete gradients of $E(r)$ and $rD(r)$ respectively, and then output $\hat{G} = \hat{G}_{E} + \hat{G}_{D}$. Note that $\Bias(\hat{G}) \leq \Bias(\hat{G}_{E}) + \Bias(\hat{G}_{D})$ and $\Var(\hat{G}) \leq 2(\Var(\hat{G}_{D}) + \Var(\hat{G}_{E}))$, so it suffices to bound the bias and variance of each component separately.

\subsection{Estimating the Demand Component Gradient}

We begin by discussing how to estimate the gradient $\hat{G}_{D}$ of the demand component of revenue. %
One method of doing so is by estimating $D(r)$ with a parametric function $f_{\theta}(r)$, and using this approximation to estimate the gradient $\hat{G}_D$. (See Appendix~\ref{app:discussion-demand-function} for additional justification for why this is likely to be possible and helpful). %
Suppose that we have access to additional historical data $\mathcal{S}$ %
with which we can fit our parametric class to $D(r)$;  let $\hat{\theta}$ be the resulting learned parameter. This learned demand function gives rise to the following estimator $\hat{G}_D$:
\begin{equation}\label{eq:estimator-demand-gradient}
\hat{G}_D = \frac{r^+ f_{\hat{\theta}}(r^+) - r^- f_{\hat{\theta}}(r^-)}{r^+ - r^-}
\end{equation}

Note that this decreases overall variance, the variance of $\hat{G}_D$ is 0 because the randomness of $\hat{G}_D$ only comes from historical samples $\mathcal{S}$, which are independent of the samples obtained in the current round, at the cost of a possible increase in bias (due to inaccuracy in estimating $D(r)$). 

\subsection{Estimating the Bidding Component Gradient}
\label{sec:VR}
In this section we propose a variance reduction method to achieve a better estimator for $\hat{G}_E$ for a variety of bidder models.
\vspace{-10pt}
\paragraph{Variance reduction via bid truncation.}

We first consider the special case of the perfect response (and more generally, the $\eps$-bounded response) bidding model. In the perfect response model, if you were going to bid $b > r^{+}$ when the reserve was $r^{+}$, you will bid the same bid $b$ when the reserve is $r^{-}$. This means that large bids (bids larger than $r^{+}$) do not contribute in expectation to $\mu(r^{+}) - \mu(r^{-})$, but they do add noise to our gradient estimation. By filtering these out, we can reduce the variance of our estimator while keeping our estimator unbiased.

Since we only apply this filtering when estimating the bidding component $E(r)$ but not the demand component $r D(r)$, we must be careful when implementing this. Note that a large bid $b > r^{+}$ contributes $b - r^{+}$ to $E(r^{+})$ and $b - r^{-}$ to $E(r^{-})$, and therefore $r^{+} - r^{-}$ to $E(r^{+}) - E(r^{-})$. We can therefore construct an unbiased estimator for $E(r^{+}) - E(r^{-})$ by computing the contribution of unfiltered bids ($b < r^{+}$) from both $\B(r^{+})$ or $\B(r^{-})$ and then adding $r^{+} - r^{-}$ for each filtered bid in $\B(r^{-})$ (or equivalently, each filtered bid in $\B(r^{+})$; under perfect response, the fraction of filtered bids is equal in both models in expectation). Note that every bid from $\B(r^{+})$ is either filtered or has excess $0$, so we can write this gradient $\hat{G}_E$ entirely in terms of bids from $\B(r^{-})$. Formally, we define truncated bid $Y_i^-$ as
{\small
$$
Y^{-}_i = \left\{
\begin{array}{cl}
\max(X^{-}_i - r^{-}, 0) & \mbox{if } X^{-}_i \leq r^{+}  \\
(r^+ - r^-) & \mbox{otherwise} 
\end{array}
\right.
$$
}
Our estimate for the gradient of $E(r)$ is then given by 
\begin{eqnarray}\label{eq:discrete-gradient-bid-truncation}
\hat{G}_E = -\frac{\sum_{i = 1}^{n} Y^{-}_i}{n(r^{+} - r^{-})}.
\end{eqnarray}

Since any bid in an $\eps$-bounded model only differs from one in the perfect response model by at most $\eps$, we can apply this same estimator to an $\eps$-bounded response model. The following theorem characterizes the bias and variance of the estimator for the $\eps$-bounded response model.%

\begin{theorem}\label{thm:bias-variance-eps-bounded}
Assume that $r^{+} - r^{-} = \delta$, then the estimator $\hat{G}_E$ in Eq.~(\ref{eq:discrete-gradient-bid-truncation}) for $\eps$-bounded response model, satisfies: $\Bias(\hat{G}_E) = \frac{2\eps}{\delta}, \Var(\hat{G}_E) \leq  \frac{1}{4n}$.
\end{theorem}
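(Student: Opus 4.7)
The plan is to handle the variance bound and the bias bound separately, with the variance being essentially immediate from a boundedness observation, and the bias requiring a comparison between the $\epsilon$-bounded and perfect response models via a Lipschitz/continuity argument.

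For the variance, the key observation is that the truncation rule defining $Y_i^{-}$ is the continuous, piecewise-linear ramp $x \mapsto \max(x - r^{-},0)$ capped at $\delta$ for $x \geq r^{+}$ (the two branches agree at $x = r^{+}$, giving value $\delta$). Hence $Y_i^{-} \in [0,\delta]$ always, so Popoviciu's inequality gives $\Var(Y_i^-) \leq \delta^2/4$. Since $\hat{G}_E$ is an average of $n$ i.i.d.\ copies of $Y_i^-$ scaled by $-1/\delta$, we get $\Var(\hat{G}_E) = \Var(Y_i^-)/(n\delta^2) \leq 1/(4n)$ as required.

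For the bias, I would first prove the $\epsilon = 0$ case as a warm-up: in the perfect response model the estimator is exactly unbiased. Concretely I would show $\E[Y^-_{\mathrm{perfect}}] = E_{\mathrm{perfect}}(r^-) - E_{\mathrm{perfect}}(r^+)$ by a case analysis on $v$, partitioned as (a) $v < r^-$, (b) $r^- \leq v$ with $b(0,v) < r^-$, (c) $r^- \leq b(0,v) < r^+$, and (d) $b(0,v) \geq r^+$. In case (a) both sides vanish, in cases (b) and (c) both sides contribute $b_{\mathrm{perfect}}(0,v) - r^-$ (or $0$), and in case (d) both sides contribute $\delta$; this is where the choice to add $\delta$ in the large-bid branch of $Y^-$ becomes essential. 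Next I would extend to general $\epsilon$-bounded response: because the ramp function $f$ and the excess function $g_r(x) = \max(x-r,0)$ are both $1$-Lipschitz, and $|b_\epsilon(r,v) - b_{\mathrm{perfect}}(r,v)| \leq \epsilon$ pointwise, each of the three quantities $\E[Y^-_\epsilon]$, $E_\epsilon(r^+)$, $E_\epsilon(r^-)$ shifts by at most $\epsilon$ relative to its perfect-response counterpart. Combining with the perfect-response identity and using the triangle inequality (together with the observation that the shifts in $\E[Y^-_\epsilon]$ and $E_\epsilon(r^-)$ both come from the same perturbation at reserve $r^-$, letting us absorb one of the three $\epsilon$ terms) yields $|\E[Y^-_\epsilon] + E_\epsilon(r^+) - E_\epsilon(r^-)| \leq 2\epsilon$, and dividing by $\delta$ delivers the claimed bias.

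The main technical obstacle will be the perfect-response case analysis: there are several subcases depending on the interplay between $v$, $b(0,v)$, $r^-$, and $r^+$, and verifying the identity cleanly in each subcase requires some care. A secondary subtlety is in the $\epsilon$-bounded step: bids near the threshold $r^+$ can be perturbed across the two branches of $f$, but the continuity of $f$ at $r^+$ ensures the estimator remains $1$-Lipschitz in the bid and keeps the Lipschitz argument tight rather than incurring an extra $\delta$ term from discontinuity.
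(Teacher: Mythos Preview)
Your variance argument is identical to the paper's: both use that $Y_i^{-}\in[0,\delta]$ to bound its variance by $\delta^2/4$ and then divide through.

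Your bias argument is correct but proceeds along a genuinely different route. The paper argues directly at the level of bid CDFs: it expands $E(r^+)-E(r^-)$ and $\E[Y_i^-]$ as integrals against $\B_{r^+}$ and $\B_{r^-}$, uses the $\eps$-bounded property in the form $\B_{r^+}(b)=\B_{r^-}(b)$ for $b>r^{+}+\eps$ to collapse the tails, and applies integration by parts to reduce the bias to $\bigl|\int_{r^+}^{r^++\eps}\B_{r^+}(b)\,db-\int_{r^+}^{r^++\eps}\B_{r^-}(b)\,db\bigr|\le 2\eps$. You instead establish the perfect-response identity $\E[Y^-_{0}]=E_0(r^-)-E_0(r^+)$ by a pointwise case split on $v$ and $b(0,v)$, and then treat the $\eps$-bounded model as a pointwise perturbation of size at most $\eps$, pushing through the $1$-Lipschitz maps $f$ and $g_r$. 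Your approach is more modular and makes the $\eps=0$ unbiasedness completely transparent; the paper's approach is a single computation that exposes the distributional structure (equality of CDFs above $r^++\eps$) directly.

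One point to make explicit when you execute the ``absorb one $\eps$ term'' step: it is not enough that the shifts in $\E[Y^-_\eps]$ and $E_\eps(r^-)$ come from the same bid $b_\eps(r^-,v)$; you need that their \emph{difference} is a $1$-Lipschitz function of that bid. This holds because $f(x)-g_{r^-}(x)=-\max(x-r^+,0)=-g_{r^+}(x)$, so $(A_\eps-C_\eps)-(A_0-C_0)$ is bounded by $\eps$, and together with $|B_\eps-B_0|\le\eps$ you get the claimed $2\eps$. Without this observation the naive triangle inequality gives only $3\eps/\delta$.
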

Note that the bias of estimator $\hat{G}_E$ is 0 for the perfect response model. %
 The complete proof is given in Appendix~\ref{app:eps-bounded-response}. Combining the above results for $\hat{G}_E$ and $\hat{G}_D$, we have the following improved convergence result for the $\eps$-bounded response model.
\begin{corollary}\label{cor:convergence-rate-bid-truncation}
Suppose $\Bias(\hat{G}_D) \leq \eps_D/\delta$. Using the estimator $\hat{G}_E$ proposed in Eq.~(\ref{eq:discrete-gradient-bid-truncation}) for the $\eps$-bounded response model, setting $T = N^{2/3}$ and $\delta = \Theta(\sqrt{\eps+\eps_D})$, Algorithm \ref{alg:OSGD-revenue-general} achieves convergence,
$\min_{t\in[T]} |\PC^t|^2 \leq \widetilde{\mathcal{O}}\left(\eps+\eps_D+N^{-1/3}\right)$.
\end{corollary}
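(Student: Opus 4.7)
The plan is to combine the bias/variance guarantees for the two component estimators with the informal convergence bound in Theorem~\ref{thm:informal-convergence-rate}. Recall from the decomposition discussion that if $\hat{G}=\hat{G}_E+\hat{G}_D$, then
$\Bias(\hat{G}) \le \Bias(\hat{G}_E)+\Bias(\hat{G}_D)$ and $\Var(\hat{G})\le 2(\Var(\hat{G}_E)+\Var(\hat{G}_D))$. So first I would plug in the component bounds: Theorem~\ref{thm:bias-variance-eps-bounded} gives $\Bias(\hat{G}_E) \le 2\eps/\delta$ and $\Var(\hat{G}_E)\le 1/(4n)$ for the $\eps$-bounded model, while by hypothesis $\Bias(\hat{G}_D)\le \eps_D/\delta$ and (by Section~4.2, since the demand estimator only uses historical data independent of the current round) $\Var(\hat{G}_D)=0$. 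Combining these gives overall bias bound $B = O((\eps+\eps_D)/\delta)$ and variance bound $V = O(1/n)$.

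Next I would invoke Theorem~\ref{thm:informal-convergence-rate}, which states
\begin{equation*}
\min_{t\in[T]} |\PC^t|^2 \;=\; \tilde{\mathcal{O}}\!\left(T^{-1/2}+\delta^2+B^2+V+(T/N)^2\right).
\end{equation*}
Substituting $B^2 = O((\eps+\eps_D)^2/\delta^2)$ and $V=O(1/n)=O(T/N)$ (since a uniform allocation gives $n=N/T$ samples per iteration), we obtain
\begin{equation*}
\min_{t\in[T]} |\PC^t|^2 \;=\; \tilde{\mathcal{O}}\!\left(T^{-1/2}+\delta^2+\tfrac{(\eps+\eps_D)^2}{\delta^2}+\tfrac{T}{N}+\tfrac{T^2}{N^2}\right).
\end{equation*}

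The final step is the parameter optimization. Choosing $\delta = \Theta(\sqrt{\eps+\eps_D})$ balances the two $\delta$-dependent terms, each becoming $\Theta(\eps+\eps_D)$. This leaves $T^{-1/2}+T/N+(T/N)^2$, and setting $T=N^{2/3}$ makes $T^{-1/2}=T/N=N^{-1/3}$ while the $(T/N)^2=N^{-2/3}$ term is dominated. Summing all contributions yields $\tilde{\mathcal{O}}(\eps+\eps_D+N^{-1/3})$, which matches the claimed rate.

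The main obstacle, if any, is the bookkeeping in the parameter balancing: one must verify that the chosen $T$ and $\delta$ simultaneously satisfy any validity constraints hidden in Theorem~\ref{thm:convergence-rate} (e.g., that $\beta_t = \delta/(2r_t)\in(0,1)$ and $n_t \ge 1$), but these are easily met for $N$ sufficiently large. Beyond that, the argument is a clean substitution into the general framework, with the variance-reduction improvement over Corollary~\ref{cor:convergence-rate-naive-estimator} coming entirely from the fact that bid truncation replaces the $1/\delta^2$ scaling in the variance by a dimension-free $1/n$, which in turn allows the much larger iteration count $T=N^{2/3}$ (versus $T=N^{1/2}$ in the naive case).
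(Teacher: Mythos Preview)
Your proposal is correct and follows precisely the approach the paper intends: the paper does not spell out a proof of this corollary beyond ``Combining the above results for $\hat{G}_E$ and $\hat{G}_D$,'' and your argument is exactly that combination --- plugging the bias/variance bounds from Theorem~\ref{thm:bias-variance-eps-bounded} and the demand-estimator assumptions into Theorem~\ref{thm:informal-convergence-rate}, then optimizing $\delta$ and $T$. The arithmetic (in particular $n=N/T=N^{1/3}$ so $V=O(N^{-1/3})$, and $T^{-1/2}=T/N=N^{-1/3}$) is correct.
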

For perfect response bidding models, the above convergence rate is strictly faster than the convergence rate of naive estimator in Corollary~\ref{cor:convergence-rate-naive-estimator} (state-of-the-art convergence rate for zeroth-order stochastic gradient descent), but with additional bias coming from demand estimation. However, we show this bias has practically negligible effect on the revenue in our experiments. 

\paragraph{Variance reduction via quantile truncation.}

In Eq.~(\ref{eq:discrete-gradient-bid-truncation}), we reduced the variance of $\hat{G}_E$ by truncating all bids at the fixed threshold of $t = r^{+}$. In general, this does not quite work: for bidder response models that are far from perfect response, this truncation can introduce a very large bias. Here we demonstrate one technique for constructing good estimators $\hat{G}_E$ as long as the bidding function $b(r,v)$ possesses diminishing sensitivity in value to reserve. %

Instead of truncating in \textit{bid space}, we will instead want to truncate in \textit{value space} to reduce the variance. Even though we cannot directly truncate by values, since $b(r, v)$ is monotonically increasing in $v$, quantiles of bids (e.g., of $\B(r^{+})$ and $\B(r^{-})$) directly correspond to quantiles of values (of $\F$). Instead of setting a threshold $t$ directly on the value, it is therefore equivalent to truncate at a \textit{fixed quantile of the bid distribution}. 

To achieve this, we first sort $X_i^+$ and $X_i^-$ in ascending order. Then we compute $\hat{G}_E$ as
\begin{eqnarray}\label{eq:discrete-gradient-quantile-truncation}
\hat{G}_E = \frac{\sum_{i = 1}^{qn} \max(X^{+}_i - r^{+}, 0) - \sum_{i = 1}^{qn} \max(X^{-}_i - r^{-}, 0)}{n(r^{+} - r^{-})} - (1-q),
\end{eqnarray}
where $q\in [0, 1]$ is the \emph{quantile threshold} used to truncate bids. The following theorem characterizes the bias and variance of the above $\hat{G}_E$,

\begin{theorem}\label{thm:quantile-vr}
Let $r^{+} - r^{-} = \delta$, $t = \F^{-1}(q)$, and $\tilde{t} = \F^{-1}(q + n^{-2/3})$. Then the estimator $\hat{G}_E$ in Eq.~(\ref{eq:discrete-gradient-quantile-truncation}) satisfies,
$\Bias(\hat{G}_E)\leq \frac{(1-q)(b(r^{+}, t) - b(r^{-}, t))}{\delta} + O(n^{-2/3}),
\Var(\hat{G}_E)\leq \frac{2\tilde{t}^2}{n\delta^2} + O(n^{-5/3}\delta^{-2})$.
\end{theorem}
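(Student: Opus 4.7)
The plan is to reduce the sort-based estimator $\hat G_E$ to an idealized value-space estimator whose bias and variance are straightforward to compute, and then control the discrepancy using concentration of the empirical quantile. Since $b(r,\cdot)$ is non-decreasing, the true $q$-quantile of $\B(r)$ is $b(r,t)$; this motivates
\begin{equation*}
\tilde G_E \;=\; \frac{1}{n\delta}\sum_{i=1}^{n}\Bigl(\max(X_i^{+} - r^{+}, 0)\,\1\{v_i \leq t\} \;-\; \max(X_i^{-} - r^{-}, 0)\,\1\{v_i \leq t\}\Bigr) \;-\; (1-q),
\end{equation*}
which, unlike $\hat G_E$, is a normalized sum of i.i.d.\ bounded terms.

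For the bias of $\tilde G_E$, I would split $E(r^{+}) - E(r^{-})$ into the parts over $\{v \leq t\}$ and $\{v > t\}$. The $v \leq t$ part is precisely $\delta\,(\E[\tilde G_E] + (1-q))$; on $\{v > t\}$, since $t \geq r^{+}$, $\max(b(r^{\pm},v) - r^{\pm}, 0) = b(r^{\pm},v) - r^{\pm}$, so the $v > t$ part equals $\E[(b(r^{+},v) - b(r^{-},v))\,\1\{v > t\}] - \delta(1-q)$. The $-\delta(1-q)$ exactly cancels the $-(1-q)$ in $\tilde G_E$, leaving $\E[\tilde G_E] - G^{\ast} = -\delta^{-1}\E[(b(r^{+},v) - b(r^{-},v))\,\1\{v > t\}]$, where $G^{\ast}$ denotes the true discrete gradient. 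Applying diminishing sensitivity with $v_H = v$, $v_L = t$, $r = r^{-}$ bounds the integrand pointwise by $b(r^{+},t) - b(r^{-},t)$, producing exactly the first summand of the claimed bias bound. For the variance, each i.i.d.\ summand is bounded by $v \leq t \leq \tilde t$, so its second moment is at most $\tilde t^{\,2}$ and independence across auctions gives $\Var(\tilde G_E) \leq 2\tilde t^{\,2}/(n\delta^2)$.

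The main obstacle is the coupling between $\hat G_E$ and $\tilde G_E$: the former truncates at the \emph{empirical} $q$-quantile of $\{X_i^{\pm}\}$, so it is not a sum of independent terms. I would define the good event $\mathcal E$ on which the empirical $q$-quantile on each side lies below $b(r^{\pm},\tilde t)$; the choice $\tilde t = \F^{-1}(q + n^{-2/3})$ is engineered so that a binomial concentration argument on $\#\{i : v_i \leq \tilde t\}$ forces $\mathcal E$ to hold except on a low-probability event, and on $\mathcal E$ the symmetric difference between the index sets selected by $\hat G_E$ and $\tilde G_E$ has size $O(n^{1/3})$ with each swapped bid bounded by $\tilde t$. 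This yields $|\hat G_E - \tilde G_E| = O(\tilde t\,n^{-2/3}/\delta)$ on $\mathcal E$, which contributes additively $O(n^{-2/3})$ to the bias (absorbed into the stated slack); off $\mathcal E$, the trivial bound $|\hat G_E| = O(1/\delta)$ combined with $\PP[\mathcal E^c]$ produces the residual $O(n^{-5/3}\delta^{-2})$ variance term. Combining these corrections with the bounds on $\tilde G_E$ produces both estimates in the theorem.
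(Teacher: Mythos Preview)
Your decomposition into an idealized value-truncated estimator $\tilde G_E$ plus a coupling correction is a reasonable strategy, and your analysis of $\tilde G_E$ itself (both bias and variance) is correct and matches the integral computation the paper eventually performs. However, the coupling step contains a quantitative error that breaks the argument as written.

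The index set of $\tilde G_E$ on, say, the $+$ side is $\{i:v_i^{+}\le t\}$, which by monotonicity of $b(r^{+},\cdot)$ coincides with the $K^{+}$ smallest bids, where $K^{+}\sim\mathrm{Bin}(n,q)$. Hence the symmetric difference with the index set of $\hat G_E$ has size exactly $|K^{+}-qn|$, whose typical scale is $\Theta(\sqrt{n})$, not $O(n^{1/3})$. Your event $\mathcal E$ only pins down the \emph{location} of the $qn$-th order statistic (forcing $v_{(qn)}\le\tilde t$); it does not constrain $|K^{+}-qn|$. With the correct $\sqrt{n}$ scale, the crude bound ``$|\hat G_E-\tilde G_E|\le \tilde t\,|K^{+}-qn|/(n\delta)$'' still salvages the variance claim up to constants (since $\E[(K^{+}-qn)^2]=O(n)$ gives a correction of order $\tilde t^{2}/(n\delta^2)$), but for the bias it only yields $O(\tilde t/(\sqrt{n}\,\delta))$, not the $O(n^{-2/3})$ you assert. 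To recover the stated bias you would need to exploit the sign cancellation in $\E[S_{qn}-S_{K^{+}}]$ coming from $\E[K^{+}-qn]=0$, which is a genuinely additional argument you have not supplied.

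The paper sidesteps both issues by working directly with the order-statistic sum rather than coupling to an i.i.d.\ proxy. For the variance it applies the Efron--Stein inequality to $S_{qn}=\sum_{i\le qn}Y_{(i)}$, obtaining $\Var(S_{qn})\le n\,\E[(Y_{(qn)})^2]$ in one line, and then bounds $\E[(Y_{(qn)})^2]$ via concentration of the $qn$-th uniform order statistic near $q$ (this is where $\tilde t$ enters). For the bias it conditions on the $(qn{+}1)$-th uniform order statistic: given $U_{(qn+1)}=r$, the $qn$ smallest are i.i.d.\ $U[0,r]$, so $\frac1n\E[S_{qn}\mid U_{(qn+1)}=r]$ is an exact integral, and the remaining error comes only from $|r-q|$. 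This avoids any comparison of two random index sets and delivers the $O(n^{-2/3})$ correction directly.
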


Unlike with bid truncation, with quantile truncation we have a clear bias-variance tradeoff as we change $q$: larger values of $q$ decrease the bias (both by decreasing $(1-q)$ and $b(r^{+}, t) - b(r^{-}, t)$, which is decreasing due to diminishing sensitivity) but lead to larger variance. Since one can estimate this bound on the bias (by approximating $b(r^{+}, t) - b(r^{-}, t)$ via $Y^{+}_{qn} - Y^{-}_{qn}$), it is possible to choose $q$ to optimize this bias-variance tradeoff 
as one sees fit (for example, to minimize $B^2 + V$ in Theorem \ref{thm:informal-convergence-rate}). %
We show a convergence rate result for this \emph{quantile truncation} approach in Corollary~\ref{cor:convergence-rate-quantile-truncation} in Appendix~\ref{app:convergence-rate-quantile-truncation}.

\vspace{-10pt}
\section{Experiments}\label{sec:experiment}
\vspace{-5pt}
We evaluate the performance of our algorithms on synthetic and semi-synthetic data sets. %
Due to space limitations, we present the complete experimental results in Appendix~\ref{app:additional-experiment}.

\vspace{-5pt}
\subsection{Data Generation}
The data generation process consists of two parts: a base bid distribution specifying the distribution of bids when no reserve is set, and a response model describing how a bidder with bid $b$ would update its bid in response to a reserve of $r$.

{\bf Response models. } %
We assume that in the absence of a reserve bidders bid a constant fraction $\gamma$ of their value $v$ (i.e., $b = \gamma v$), which we refer to as \emph{linear shading}. We consider linear shading combined with perfect response and with $\epsilon$-bounded response, which we implement by adding a uniform $[0, \epsilon]$ random variable to the bid. We also examine equilibrium bidding for $n$ i.i.d.\ bidders with uniformly distributed valuation~\citep{krishna2009auction}: $b = \frac{r^n + (n-1)v^n}{n v^{n-1}}$.

{\bf Synthetic data. }
In our synthetic data sets, the (base) bid distribution is the uniform $[0,1]$ distribution. We apply the perfect response model, $\eps$-bounded response model and equilibrium bidding model. In the simulations, we apply a constant shading factor of $0.4$ for the perfect response model and $\eps$-bounded response model. For equilibrium bidding, we assume that each auction contains $n = 2$ bidders.%

{\bf Semi-synthetic data. }
For our semi-synthetic data sets, we separately collected the empirical distributions of winning bids over one day for 20 large publishers on a major display ad exchange. Each distribution was filtered for outliers and normalized to the interval $[0, 1]$. For this semi-synthetic data we only test the perfect-response model and $\eps$-bounded response model, since there is no closed-form solution for the equilibrium bidding strategy. We use 0.3 as the constant shading factor for semi-synthetic data.

\subsection{Methodology}

{\bf Gradient descent algorithms.} We examine five different gradient descent algorithms: (I) \emph{Naive GD}: naive gradient descent using the gradient estimator in Eq.~(\ref{eq:naive-gd}); (II) \emph{Naive GD with bid truncation}: gradient descent using the gradient estimator in Eq.~(\ref{eq:discrete-gradient-bid-truncation}) for the bidding component, and a naive estimate\footnote{We can form a naive unbiased estimator $\hat{G}_D = \frac{\hat{D}(r^+) - \hat{D}(r^-)}{r^+ - r^-}$, where $\hat{D}(r^+) = \frac{1}{n}\sum_i \1\{x^+_i \geq r^+\}$ and similarly for $\hat{D}(r^-)$.} of the demand component; 
(III) \emph{Naive GD with quantile truncation}: gradient descent using the gradient estimator in Eq.~(\ref{eq:discrete-gradient-quantile-truncation}) for the bidding component, and naive estimate of the demand component; 
(IV) \emph{Demand modeling with bid truncation}: Same as the second variant, but with a parametric model of the demand curve to estimate demand component of gradient;
(V) \emph{Demand modeling with quantile truncation}: Same as the third variant, but with a parametric model of the demand curve to estimate demand component of gradient.
The parameters used in these algorithms are specified in Appendix~\ref{app:additional-experiment}.

\begin{figure*}[t]
\begin{subfigure}[b]{0.33\textwidth}
\centering
\includegraphics[scale=0.33]{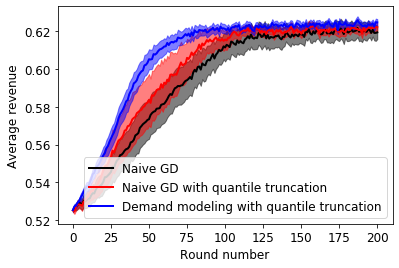}
\caption{Synthetic data with perfect response.}
\label{fig:synthetic-perfect-response-revenue}
\end{subfigure}
\begin{subfigure}[b]{0.33\textwidth}
\centering
\includegraphics[scale=0.33]{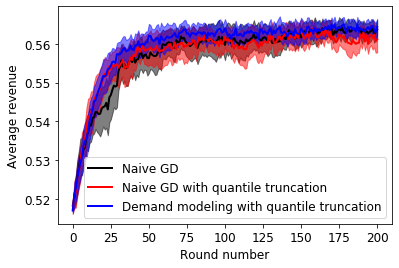}
\caption{Synthetic data with equilibrium response.}
\label{fig:synthetic-equilibrium-response-revenue}
\end{subfigure}
\begin{subfigure}[b]{0.33\textwidth}
\centering
\includegraphics[scale=0.33]{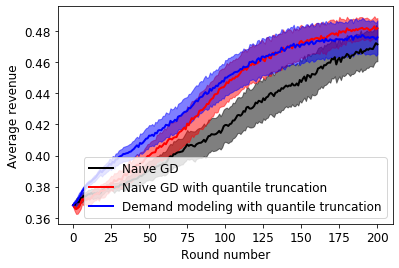}
\caption{Semi-synthetic data with perfect response.}
\label{fig:dremel-perfect-response-revenue}
\end{subfigure}
\caption{Revenue as a function of round $t$ for (a) synthetic data with perfect response, (b) synthetic data with equilibrium response, and (c) semi-synthetic data with perfect response.}
\label{fig:revenue-quantile}
\vspace{-15pt}
\end{figure*}

{\bf Demand curve estimation. } To reduce variance following the ideas of Section~\ref{sec:revenue-decomposition}, we need a model $\hat{G}_D$ for the demand component of the discrete gradient. Instead of estimating $\hat{G}_D$ from historical data, we adaptively learn the demand curve during the training process.
Concretely, at each round $t$, we observe new (reserve, demand) pairs from $2n_t$ samples and retrain our demand curve using all the samples observed up to the current round. We use this trained demand curve to compute $\hat{G}_D$ based on~\eqref{eq:estimator-demand-gradient}. For the synthetic data, a simple logistic regression can effectively learn the demand curve. However, the semi-synthetic data required a more flexible model so for this case we model demand using a fully connected neural network with 1 hidden layer, 15 hidden nodes and ReLU activations.

\begin{figure}[t]
\centering
\includegraphics[scale=0.4]{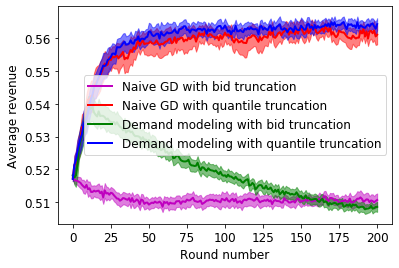}
\caption{Revenue as a function of round $t$ for synthetic data with equilibrium response.}%
\label{fig:failure-truncated-vr}
\vspace{-15pt}
\end{figure}

\subsection{Evaluation}

{\bf Effectiveness of gradient descent.}
First, we confirm that gradient descent can effectively find optimal reserves in our models. 
For each semi-synthetic model, we construct the revenue curve as a function of reserve %
with assumed response models. %
We find that 19 out of the 20 revenue curves have a clear single local maximum (the remaining curve has 2). In all cases (synthetic and semi-synthetic models), the revenue learned by the naive gradient descent algorithm is at least 95\% of the revenue at the optimal reserve, which indicates that gradient descent can efficiently find the optimal reserve in these cases despite the lack of convexity.

{\bf Effectiveness of variance reduction methods.}
We first evaluate the performance of the quantile-based variance reduction method. We run the algorithm variants (I), (III) and (V) under synthetic data and semi-synthetic data with multiple bidder response models. Figures~(\ref{fig:synthetic-perfect-response-revenue}) and~(\ref{fig:dremel-perfect-response-revenue}) show the revenue achieved by the three algorithms over time under the perfect response model. We find that quantile-based variance reduction leads to a more stable training process which converges faster than naive gradient descent. Figure~(\ref{fig:synthetic-equilibrium-response-revenue}) evaluates the performance of the three algorithm variants under synthetic data and an equilibrium response model, with similar conclusions. Overall, quantile-based variance reduction outperforms naive gradient descent. Moreover, with the addition of demand curve estimation, algorithm variant (V) achieves better revenue and converges to an optimal reserve faster than the other two algorithms, in agreement with our theoretical guarantees.

 We next consider variance reduction using bid truncation, which is used in algorithm variants (II) and (IV). Bid truncation is tailored to perfect response and performs the best overall for this response model, in accordance with the theoretical guarantees, but quantile truncation is competitive and often performs as well over the semi-synthetic data (see Appendix~\ref{app:additional-experiment} for a detailed comparison).
Under the equilibrium response model, bid truncation can in fact hinder the training process and lead to a substantially suboptimal reserve price (see Figure~\ref{fig:failure-truncated-vr}). In summary, quantile-based variance reduction coupled with a good demand-curve estimation is the method of choice to achieve good reserve prices under a range of different bid distributions and bidder response models.

\bibliography{rpo_first_price}
\bibliographystyle{plainnat}

\onecolumn
\newpage
\appendix
\appendix
\begin{center}
{
\Large
\textbf{
Reserve Price Optimization for First Price Auctions}
~\\
~\\	
Appendix
}
\end{center}

\section{From multiple bidders to a single bidder}\label{sec:multi-to-one}

Our auction can contain multiple bidders, each with their own value distribution $F_i$ and bid function $b_i(r, v_i)$. But when setting reserve prices, we only care about the maximum bid; more specifically, the distribution of maximum bid at each reserve. Thus it is useful to abstract away the set of bidders in the auction as a single ``mega-bidder'' whose value is the maximum of all the bidders' values and who always bids the maximum of all the bids. 

\begin{theorem}
Let $\F$ be the distribution of $\max(v_1, v_2, \dots, v_m)$ (where each $v_i$ is independently drawn from $F_i$) and let $\B(r)$ be the distribution of $\max(b_1(r, v_1), b_2(r, v_2), \dots, b_m(r, v_m))$. Then there exists a bid function $b(r, v)$ such that the distribution of $b(r, v)$ when $v \sim \F$ is equal to the distribution $\B(r)$. 
\end{theorem}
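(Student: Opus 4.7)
The strategy is to build $b(r,v)$ by \emph{quantile coupling}. For each reserve $r$, let $F$ be the CDF of $\F$ and $B_r$ the CDF of $\B(r)$, and define
\[
b(r,v) \;:=\; B_r^{-1}\!\bigl(F(v)\bigr), \qquad B_r^{-1}(u) := \inf\{x : B_r(x) \geq u\}.
\]
When $v \sim \F$ and $F$ is continuous, $F(v)$ is uniform on $[0,1]$, hence $B_r^{-1}(F(v)) \sim \B(r)$; when $F$ has atoms one applies the standard distributional-transform fix (randomize $F(v)$ uniformly across the atom) to retain the same conclusion. This immediately yields the distributional identity $b(r,v) \stackrel{d}{=} \B(r)$. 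The remaining work is simply to verify that the constructed $b(r,v)$ satisfies properties (1)--(4) of Definition~\ref{def:fp-property}.

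Property (4) is free: both $F$ and $B_r^{-1}$ are non-decreasing, so $v \mapsto b(r,v)$ is non-decreasing. For property (1), the key observation is a stochastic dominance: since each per-bidder bid function satisfies $b_i(r,v_i)\leq v_i$, we have $\max_i b_i(r,v_i) \leq \max_i v_i$ pointwise, hence $B_r(x) \geq F(x)$ for every $x$, and therefore $B_r^{-1}(q) \leq F^{-1}(q)$ for every $q \in [0,1]$. Setting $q = F(v)$ gives $b(r,v) \leq F^{-1}(F(v)) \leq v$.

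For properties (2) and (3), I would use the per-bidder properties (2) and (3) to observe that $\max_i b_i(r,v_i) = 0$ if and only if every $v_i < r$, i.e., if and only if $\max_i v_i < r$. Consequently $B_r(0^-) = F(r^-)$ and $B_r$ has no mass on $(0,r)$. From this: for $v < r$ we get $F(v) \leq F(r^-) = B_r(0^-)$, so $b(r,v) = 0$; and for $v \geq r$ we get $F(v) > B_r(0^-)$ (or equality only on a null set handled by convention), so by the definition of the generalized inverse and the absence of mass of $B_r$ on $(0,r)$, $b(r,v) \geq r$.

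The only real obstacle is measure-theoretic bookkeeping around atoms of $F$ or $B_r$: the generalized inverse can sit on a plateau, and $F(v)$ need not be uniform if $F$ has atoms. Both are resolved by the classical distributional transform (augment $v$ with an independent $U \sim \mathrm{Uniform}[0,1]$ and use $\widetilde F(v,U) = F(v^-) + U(F(v)-F(v^-))$ in place of $F(v)$); since bid functions in the paper may be interpreted as randomized mappings, this is harmless. The substantive content of the proof is therefore just (i) stochastic dominance of max-value over max-bid and (ii) alignment of the ``zero-bid'' region with the ``below-reserve'' region, both of which follow directly from Definition~\ref{def:fp-property} applied per bidder.
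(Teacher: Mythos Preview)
Your proposal is correct and uses essentially the same construction as the paper: the quantile coupling $b(r,v) = B_r^{-1}(F(v))$. The paper's proof is a single line stating this and invoking the probability integral transform; your additional work on atoms via the distributional transform and on verifying properties (1)--(4) of Definition~\ref{def:fp-property} goes beyond what the paper proves (the paper only remarks, without proof, that the reduction preserves those properties).
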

\begin{proof}
If $B_r(b)$ is the CDF of $\B(r)$ and $F(v)$ is the cdf of $\F$, let $b(r, v) = B_r^{-1}(F(v))$. This guarantees that if $v \sim \F$, then $b(r, v) \sim \B(r)$. 
\end{proof}

Note that this reduction also preserves the properties of Definition \ref{def:fp-property}. For example, if $b_i(r, v_i) \leq v_i$ for every bidder $i$, then the induced $b(r, v)$ also satisfies $b(r, v) \leq v$.

\section{Demand Function Estimation}\label{app:discussion-demand-function}
As in Section~\ref{sec:method}, it is possible to form a naive unbiased estimate of the demand component via the estimator $\hat{D}(r) = \frac{1}{n}\sum_{i=1}^{n}\1(X_i \geq r)$. The variance of the resulting unbiased estimator $\hat{G}_{D}$ is then bounded by (see Theorem \ref{thm:var_demand}), $\Var(\hat{G_D}) \leq  \frac{(r^{+})^2}{2\delta^2 n}$.

Note that for small $r$, the variance guarantee here is significantly better than the variance guarantee in Theorem~\ref{thm:discrete-gradient-revenue}. Thus, in instances where the optimal reserve is small (and hence we mostly test small $r^{+}$), combining this naive estimator with better estimators for $\hat{G}_E$ (like the ones we explore in the next section) can already lead to better convergence rates overall.

To obtain even better estimators, we can leverage the following two facts about the demand function. First, the demand function only depends on the value distribution $\F$ of the bidders, and not their specific bidding behavior. Since we expect value to be relatively stable in comparison to bidding behavior, this means that we can reasonably use data from previous rounds to learn the demand function and inform calculation of $\hat{G}_D$ (whereas the naive gradient update only uses data from the current round). Second, we expect the demand function $D(r)$ to be simpler and more nicely structured than the full revenue function $\mu(r)$---for example, $D(r)$ is weakly decreasing in $r$---and therefore more amenable to parametric modeling.

\section{Omitted Proofs}

\subsection{Formal convergence rate}\label{app:convergence-rate-proof}

To show a convergence result for a non-convex problem with constraints, a measure called \emph{gradient mapping} is widely used in the literature~e.g.~\cite{GLZ2013, Reddi2016ProximalSM, LLCHA2018}. We define the gradient mapping used in this paper as follows,

\begin{definition}[Gradient Mapping]\label{def:gradient_mapping}
Let $f(x)$ be a differentiable function defined on $[0,1]$, $\mathcal{C} \in (0,1)$ is a convex space, and $\PiC$ is the projection operator defined as

\begin{equation}\label{eq:projection-operator}
\PiC(x) = \argmin_{z\in\mathcal{C}}|z-x|, \forall x\in \R.
\end{equation}

The gradient mapping $\PC$ is then defined as 
\begin{equation*}
\PC(r, \hat{g}, \alpha) := \frac{1}{\alpha}(\PiC(r + \alpha \hat{g}) - r)
\end{equation*}
where $\hat{g}$ is a gradient estimate of $\mu(r)$ (can be biased), $r\in \mathcal{C}$ is the reserve and $\alpha$ is the learning rate.
\end{definition}

The gradient mapping $\PC$ can be interpreted as the \emph{projected
gradient}, which offers a feasible update from the previous reserve $r$. Indeed, if the projection operator is the identity function, the gradient mapping just returns the gradient.

\begin{theorem} \label{thm:convergence-rate}
Suppose $\mu(r)$ is $L$-smooth. Let $\hat{G}_t =\frac{1}{n_t}\sum_{i=1}^{n_t}\hat{G}_{i,t}$ be the gradient estimator at time $t$, where $\forall i\in[n_t], |\hat{G}_{i,t}|\leq C$ almost surely. Fix $\alpha_t = \Theta(\frac{1}{\sqrt{T}})$, $\beta_t = \delta/2r_t$ and $n_t = N/T$. Assume for all $t$ that $\Bias(\hat{G}_t) \leq B$ and $\Var(\hat{G}_t) \leq V$. Then with probability at least $1-\delta$, Algorithm~\ref{alg:OSGD-revenue-general} satisfies that
{\small
\begin{eqnarray*}
&&\min_{t\in[T]} |\PC(r_t, \mu'(r_t), \alpha_t)|^2 \\
&\leq& \mathcal{O}\left(\sqrt{\frac{1}{T}} + \left(\delta + B + \frac{CT\ln(2T/\delta)}{N} + \sqrt{V\ln(2T/\delta)}\right)^2\right).
\end{eqnarray*}}
where $\PC$ is the gradient mapping.
\end{theorem}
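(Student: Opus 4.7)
}

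The plan is to follow the standard non-convex projected stochastic gradient analysis with inexact gradients, and carefully decompose the gradient error into (i) the finite-difference discretization error, (ii) the estimator bias $B$, and (iii) the sampling fluctuation. I will first show a one-step descent inequality. By $L$-smoothness of $\mu$, writing $g_t := \PC(r_t,\hat G_t,\alpha_t) = (r_{t+1}-r_t)/\alpha_t$ and $e_t := \mu'(r_t) - \hat G_t$, we have
\begin{equation*}
\mu(r_{t+1}) \;\geq\; \mu(r_t) + \mu'(r_t)(r_{t+1}-r_t) - \tfrac{L}{2}(r_{t+1}-r_t)^2 \;=\; \mu(r_t) + \alpha_t\hat G_t g_t + \alpha_t e_t g_t - \tfrac{L\alpha_t^2}{2}g_t^2 .
\end{equation*}
The first-order optimality of the projection, $(\PiC(r_t+\alpha_t\hat G_t) - (r_t+\alpha_t\hat G_t))(r_t - \PiC(r_t+\alpha_t\hat G_t)) \leq 0$, rearranges to $\hat G_t g_t \geq g_t^2$, so after a Young's inequality step on $\alpha_t e_t g_t$, we obtain a clean descent bound of the form $\mu(r_{t+1}) - \mu(r_t) \geq \tfrac{\alpha_t}{4} g_t^2 - C_1 \alpha_t e_t^2$ whenever $\alpha_t \leq 1/L$, which holds for $\alpha_t = \Theta(1/\sqrt{T})$.

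Next, I will telescope this inequality over $t=1,\dots,T$. Since $\mu$ is bounded (bids lie in $[0,1]$), the left-hand side is $O(1)$, so $\tfrac{1}{T}\sum_t g_t^2 \leq O\!\bigl(1/(T\alpha_t)\bigr) + O\!\bigl(\tfrac{1}{T}\sum_t e_t^2\bigr)$. With $\alpha_t = \Theta(1/\sqrt{T})$ this yields the $\sqrt{1/T}$ leading term in the theorem and reduces everything to bounding $\max_t e_t^2$ uniformly with high probability. To pass from $\min_t g_t^2$ to $\min_t |\PC(r_t,\mu'(r_t),\alpha_t)|^2$, I use non-expansiveness of the Euclidean projection $\PiC$: $|\PC(r_t,\mu'(r_t),\alpha_t) - \PC(r_t,\hat G_t,\alpha_t)| \leq |e_t|$, so $|\PC(r_t,\mu'(r_t),\alpha_t)|^2 \leq 2 g_t^2 + 2 e_t^2$.

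It remains to control $|e_t|$. I split $e_t = (\mu'(r_t) - \E[\hat G_t]) + (\E[\hat G_t] - \hat G_t)$. The first piece combines the finite-difference error with the estimator bias: by the mean value theorem the discrete derivative equals $\mu'(\xi)$ for some $\xi$ with $|\xi - r_t| \leq \delta/2$, so $L$-smoothness gives $|(\mu(r^+_t)-\mu(r^-_t))/(r^+_t-r^-_t) - \mu'(r_t)| \leq L\delta/2$, and the triangle inequality gives $|\mu'(r_t)-\E[\hat G_t]| \leq B + L\delta/2$. For the second piece, $\hat G_t$ is an average of $n_t = N/T$ i.i.d.\ bounded terms (each in $[-C,C]$) with variance at most $V$, so Bernstein's inequality together with a union bound over $t=1,\dots,T$ produces, with probability at least $1-\delta$, the uniform bound $|\E[\hat G_t] - \hat G_t| = O\!\bigl(\sqrt{V\ln(2T/\delta)} + CT\ln(2T/\delta)/N\bigr)$. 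Combining the two contributions, squaring, and plugging back yields the stated bound.

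The main obstacle is the bookkeeping of the AM/GM step that turns the cross term $\alpha_t e_t g_t$ into a clean $\tfrac{\alpha_t}{4} g_t^2 - C_1 \alpha_t e_t^2$ with a constant $C_1$ that survives the telescoping, and ensuring the chosen $\alpha_t = \Theta(1/\sqrt{T})$ is small enough that $1-L\alpha_t/2$ remains bounded away from zero; everything else is a direct application of Bernstein, smoothness, and projection non-expansiveness.
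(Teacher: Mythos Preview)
Your proposal is correct and follows essentially the same route as the paper: a smoothness-based descent inequality combined with the projection first-order optimality $\hat G_t g_t \geq g_t^2$, non-expansiveness of $\PiC$ to transfer between $\PC(r_t,\hat G_t,\alpha_t)$ and $\PC(r_t,\mu'(r_t),\alpha_t)$, and a three-way split of $e_t$ into discretization, bias, and Bernstein-controlled fluctuation. One small improvement in your write-up over the paper's own proof is that you keep the exact quadratic term $\tfrac{L\alpha_t^2}{2}g_t^2$ in the descent step, whereas the paper upper-bounds it by $\tfrac{L\alpha_t^2}{2}|\hat G_t|^2$ and then has to control $|\hat G_t|^2$ via $|\mu'(r_t)|^2$ and an auxiliary constant $L^*$; your version absorbs that term directly and is slightly cleaner.
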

To prove Theorem~\ref{thm:convergence-rate}, we first show some useful inequalities summarized in Lemma~\ref{lem:bernstein-ie} and Lemma~\ref{lem:projected-gradient-bound}.

\begin{lemma}[Bernstein Inequality]\label{lem:bernstein-ie}
Let $G = \frac{1}{n}\sum_{i=1}^n G_i$ be the random variable of estimation of revenue's gradient ($G_i$ can be correlated), where $|G_i| \leq C$ almost surely, and $V=\Var[G]$. Then, we have
\begin{align*}
&\left\vert\frac{1}{n}\sum_{i=1}^{n} G_{i} - \E\left[\frac{1}{n}\sum_{i=1}^{n} G_{i}\right]\right\vert\leq \frac{2C\ln(2T/\delta)}{3n} + \sqrt{2V\ln(2T/\delta)}
\end{align*}
holds with probability at least $1-\frac{\delta}{T}$.
\end{lemma}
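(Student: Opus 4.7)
The plan is to invoke a standard Bernstein-type tail bound for the sample mean $G = \tfrac{1}{n}\sum_{i=1}^n G_i$. Although the statement allows the $G_i$ to be correlated, the bound depends only on the boundedness $|G_i| \leq C$ and on $V = \Var[G]$, so what is really needed is enough structure on the dependence to apply a Bernstein/Bennett-style inequality. In the intended use inside Algorithm~\ref{alg:OSGD-revenue-general}, the $\hat{G}_{i,t}$ arise from independent auction pairs within round $t$, in which case classical Bernstein applies directly; if one wants to accommodate genuine dependence (e.g., a martingale-difference structure under a natural filtration), Freedman's inequality gives an identical bound and the calculation below carries over verbatim.

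The key steps are as follows. First I would recall Bernstein in the form
\[
\Pr\!\left[\Big|\textstyle\sum_{i=1}^n (G_i - \E G_i)\Big| > s\right] \;\leq\; 2\exp\!\left(-\frac{s^2/2}{\sigma^2_{\mathrm{sum}} + Cs/3}\right),
\]
where $\sigma^2_{\mathrm{sum}} = \sum_i \Var[G_i]$ and $|G_i - \E G_i|$ is almost surely bounded by $C$. Second, under independence $\sigma^2_{\mathrm{sum}} = n^2 V$, so rescaling $s = n\tau$ converts this into a tail bound on $|G - \E G|$ of the form $2\exp\!\bigl(-\tau^2/(2V + 2C\tau/(3n))\bigr)$. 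Third, equate this failure probability to $\delta/T$ (the bound stated in the lemma is a per-round guarantee designed to union-bound cleanly over the $T$ outer iterations), which produces the quadratic inequality $\tau^2 \leq 2V\ln(2T/\delta) + \frac{2C\tau\ln(2T/\delta)}{3n}$. Fourth, apply the elementary implication that $\tau^2 \leq a + b\tau$ with $a,b \geq 0$ forces $\tau \leq b + \sqrt{a}$ (complete the square: $(\tau - b/2)^2 \leq a + b^2/4$, hence $\tau \leq b/2 + \sqrt{a + b^2/4} \leq b + \sqrt{a}$) to extract the stated two-term bound.

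The main subtlety is bookkeeping around the meaning of $V$: because the statement defines $V$ as $\Var[G]$, the variance of the already-averaged quantity, the usual $1/\sqrt{n}$ factor is absorbed into $\sqrt{V}$, and the linear-in-$C$ bias term stays at order $1/n$ rather than $1/\sqrt{n}$. A secondary concern is the constant $2/3$ in the first term: invoking Bernstein through the crude bound $|G_i - \E G_i| \leq 2C$ (which follows from $|G_i| \leq C$) would give $4/3$ rather than $2/3$, so the proof must either use the centered version of Bernstein or equivalently Bennett's inequality to land on exactly the stated constant. Beyond these bookkeeping points no deep obstacle arises; the hard work is done by the concentration inequality and the rest is a short algebraic manipulation.
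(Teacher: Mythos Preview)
Your proposal is correct and follows essentially the same route as the paper: invoke Bernstein's inequality for the centered sum, set the failure probability to $\delta/T$, and solve the resulting quadratic in the deviation to obtain the two-term bound. The paper solves the quadratic via the explicit quadratic formula and then applies $\sqrt{a+b}\leq\sqrt{a}+\sqrt{b}$, whereas you use the equivalent shortcut $\tau^2\leq a+b\tau\Rightarrow\tau\leq b+\sqrt{a}$; these are interchangeable. Your remarks on the dependence structure (Freedman as a drop-in replacement) and on the constant $2/3$ versus $4/3$ are more careful than the paper itself, which simply writes $C$ in the Bernstein denominator without commenting on whether $|G_i-\E G_i|\leq C$ or $\leq 2C$ is being assumed.
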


\begin{proof}
By Bernstein's inequality, we have
\begin{align*}
&\PP\left(\left\vert\frac{1}{n}\sum_{i=1}^{n}G_{i} - \frac{1}{n}\sum_{i=1}^{n} \E[G_{i}]\right\vert \leq z\right)\geq 1 - 2\exp\left(\frac{-z^2 n}{2(V+Cz/3n)}\right)
\end{align*}
Let $2\exp\left(\frac{-z^2}{2(V+Cz/3n)}\right) = \frac{\delta}{T}$ and solving for $z$, we have
\begin{align*}
z &= \frac{\frac{2C\ln(2T/\delta)}{3n} + \sqrt{\left(\frac{2C\ln(2T/\delta)}{3n}\right)^2 + 8V\ln(2T/\delta)}}{2}\leq \frac{2C\ln(2T/\delta)}{3n} + \sqrt{2V\ln(2T/\delta)} := z'
\end{align*}
where the last inequality is based on the fact $\sqrt{a+b}\leq \sqrt{a} + \sqrt{b}$. Therefore, we have
\begin{align*}
&\PP\left(\left\vert\frac{1}{n}\sum_{i=1}^{n} G_{i} - \frac{1}{n}\sum_{i=1}^{n} \E[G_{i}]\right\vert \leq z'\right)\geq \PP\left(\left\vert\frac{1}{n}\sum_{i=1}^{n} G_{i} - \frac{1}{n_t}\sum_{i=1}^{n} \E[G_{i}]\right\vert \leq z\right) \geq 1 - \frac{\delta}{T}.
\end{align*}
\end{proof}

\begin{lemma}\label{lem:projected-gradient-bound}
For any $t\in[T]$, we have
\begin{eqnarray*}
\hat{G}_t \cdot \PC(r_t, \hat{G}_t, \alpha_t) &\geq& |\PC(r_t, \hat{G}_t, \alpha_t)|^2\\
|\PC(r_t, \mu'(r_t), \alpha_t) - \PC(r_t, \hat{G}_t, \alpha_t)| &\leq& |\mu'(r_t) - \hat{G}_t|
\end{eqnarray*}
\end{lemma}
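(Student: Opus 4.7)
The plan is to prove both inequalities by exploiting standard properties of the Euclidean projection $\PiC$ onto the convex set $\mathcal{C} \subset (0,1)$, namely (i) the first-order (variational) optimality characterization and (ii) non-expansiveness. Since $\mathcal{C}$ is one-dimensional here, the arguments are particularly clean, but I will write them in a way that would generalize to higher dimensions.

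For the first inequality, let $z^{+} := \PiC(r_t + \alpha_t \hat{G}_t)$, so that $\PC(r_t, \hat{G}_t, \alpha_t) = (z^{+} - r_t)/\alpha_t$. The projection's first-order optimality says that for every $y \in \mathcal{C}$,
\begin{equation*}
\bigl(r_t + \alpha_t \hat{G}_t - z^{+}\bigr)\bigl(y - z^{+}\bigr) \leq 0.
\end{equation*}
I would plug in $y = r_t$ (which lies in $\mathcal{C}$ by assumption), rearrange to obtain $\alpha_t \hat{G}_t (z^{+} - r_t) \geq (z^{+} - r_t)^2$, and then divide both sides by $\alpha_t^2$. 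The left-hand side becomes $\hat{G}_t \cdot \PC(r_t, \hat{G}_t, \alpha_t)$ and the right-hand side becomes $|\PC(r_t, \hat{G}_t, \alpha_t)|^2$, which is exactly the first claim.

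For the second inequality, I would simply write
\begin{equation*}
\bigl|\PC(r_t, \mu'(r_t), \alpha_t) - \PC(r_t, \hat{G}_t, \alpha_t)\bigr| = \frac{1}{\alpha_t}\bigl|\PiC(r_t + \alpha_t \mu'(r_t)) - \PiC(r_t + \alpha_t \hat{G}_t)\bigr|,
\end{equation*}
and then invoke the non-expansiveness of the Euclidean projection, $|\PiC(x) - \PiC(y)| \leq |x - y|$, with $x = r_t + \alpha_t \mu'(r_t)$ and $y = r_t + \alpha_t \hat{G}_t$. The $\alpha_t$ factors cancel and the bound $|\mu'(r_t) - \hat{G}_t|$ falls out immediately.

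Neither step is really an obstacle: both are textbook consequences of working with a Euclidean projection onto a convex set. If there is any subtle point to double-check, it is just that $r_t \in \mathcal{C}$ at every iteration (so that it is a valid test point in the variational inequality), which is guaranteed by the fact that the update rule applies $\PiC$ (here denoted $\Pi$) at every step of Algorithm \ref{alg:OSGD-revenue-general}. Once that is noted, the two inequalities follow directly as above.
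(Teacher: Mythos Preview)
Your proposal is correct and follows essentially the same approach as the paper: the first inequality is obtained from the variational characterization of the projection (taking $r_t \in \mathcal{C}$ as the test point and dividing by $\alpha_t^2$), and the second from non-expansiveness of $\PiC$. Your remark that one must check $r_t \in \mathcal{C}$ is a nice explicit observation that the paper leaves implicit.
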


\begin{proof}
Since $\mathcal{C}$ is a convex space, then for any $x\in \mathcal{C}, z\in \R$, we have $(x - \PiC(z))\cdot(z - \PiC(z)) \leq 0$. Let $z = r_t + \alpha_t \hat{G}_t$ and $x = r_t$, we have 
\begin{eqnarray*}
(r_t - \PiC(r_t + \alpha_t \hat{G}_t)) \cdot (r_t + \alpha_t \hat{G}_t - \PiC(r_t + \alpha_t \hat{G}_t)) \leq 0,
\end{eqnarray*}
which implies
\begin{eqnarray*}
\alpha_t \hat{G}_t (\PiC(r_t + \alpha_t \hat{G}_t)- r_t) \geq |r_t - \PiC(r_t + \alpha_t \hat{G}_t)|^2,
\end{eqnarray*}
Thus, we have $\hat{G}_t \cdot \PC(r_t, \hat{G}_t, \alpha_t) \geq |\PC(r_t, \hat{G}_t, \alpha_t)|^2$. Again, since $\mathcal{C}$ is a convex space, $\forall x, z \in \R, |\PiC(x) - \PiC(z)|\leq |x - Z|$. Then we can prove the second inequality,
\begin{eqnarray*}
|\PC(r_t, \mu'(r_t), \alpha_t) - \PC(r_t, \hat{G}_t, \alpha_t)|\leq \frac{1}{\alpha_t}|\PiC(r_t + \alpha_t \mu'(r_t)) - \PiC(r_t + \alpha_t \hat{G}_t|\leq |\mu'(r_t) - \hat{G}_t|
\end{eqnarray*}

\end{proof}

\begin{proof}[Proof of Theorem~\ref{thm:convergence-rate}]
Denote $r_t^+ = (1 + \beta_t)r_t$ and $r_t^- = (1-\beta_t)r_t$.
First we bound the bias of $\hat{G}_t$ compared with $\mu'(r_t)$. $|\mu'(r_t) - \hat{G}_t|$ can be decomposed as follows, 
\begin{equation}\label{eq:gradient-estimation-error1}
\begin{aligned}
\left|\mu'(r_t) - \hat{G}_t\right| &\leq \left|\mu'(r_t) - \frac{\mu(r_t^+) - \mu(r_t^-)}{r_t^+ - r_t^-}\right| + \left|\E[\hat{G}_t] - \frac{\mu(r_t^+) - \mu(r_t^-)}{r_t^+ - r_t^-}\right|+ \left|\E[\hat{G}_t] - \hat{G}_t\right| 
\end{aligned}
\end{equation}

Then we bound the three terms above separately. For the first term, we have
\begin{align*}
&\mu'(r_t) - \frac{\mu(r_t^+) - \mu(r_t^-)}{r_t^+ - r_t^-}= \frac{\mu'(r_t)\beta_t r_t -  \mu(r_t^+) + \mu'(r_t)\beta_t r_t + \mu(r_t^-)}{r_t^+ - r_t^-}
\end{align*}

By the smoothness of $\mu(r)$, 
\begin{align*}
\left|\mu(r_t^+) - \mu(r_t) - \mu'(r_t)\cdot\beta_t r_t \right|\leq \frac{L}{2}\cdot\beta_t^2 r_t^2\\
\left|\mu(r_t^-) - \mu(r_t) + \mu'(r_t)\cdot\beta_t r_t\right| \leq \frac{L}{2}\cdot\beta_t^2 r_t^2,
\end{align*}

Thus, we get
\begin{align*}
\left|\mu'(r_t) - \frac{\mu(r_t^+) - \mu(r_t^-)}{r_t^+ - r_t^-}\right| \leq \frac{1}{2\beta_t r_t} \cdot L \beta_t^2 r^2_{t}= \frac{L \beta_t r_t}{2}
\end{align*}

By assumption, the second term is bounded by $B$. Combining Lemma~\ref{lem:bernstein-ie} in Appendix, for any fixed $t\in[T]$,
\begin{eqnarray}\label{eq:gradient-estimation-error2}
\left|\mu'(r_t) - \hat{G}_t\right| \leq \frac{L\beta_t r_t}{2} + B + \frac{2C\ln(2T/\delta)}{3n_t} + \sqrt{V\ln(2T/\delta)} = B_{\delta, t},
\end{eqnarray}
holds with probability at least $1-\frac{\delta}{T}$. 

The $L$-smoothness of revenue function $\mu(r)$ implies the following inequalities, for any $t=1,\cdots, T$,

\begin{eqnarray*}
& &\mu(r_{t+1})\\
&\geq& \mu(r_t) + \mu'(r_t)\cdot(r_{t+1} - r_t) - \frac{L}{2}(r_{t+1} - r_t)^2\\
&\geq &\mu(r_t) +  \alpha_t\mu'(r_t)\cdot\PC(r_t, \hat{G}_t, \alpha_t) - \frac{L\alpha_t^2}{2}|\hat{G_t}|^2\\
&=& \mu(r_t) +  \alpha_t \hat{G}_t\cdot\PC(r_t, \hat{G}_t, \alpha_t) + \alpha_t (\mu'(r_t) - \hat{G}_t)\cdot\PC(r_t, \hat{G}_t, \alpha_t)- \frac{L\alpha_t^2}{2}|\hat{G_t}|^2\\
& \geq & \mu(r_t) +  \alpha_t |\PC(r_t, \hat{G}_t, \alpha_t)|^2 + \alpha_t (\mu'(r_t) - \hat{G}_t)\cdot\PC(r_t, \hat{G}_t, \alpha_t) - \frac{L\alpha_t^2}{2}|\hat{G_t}|^2
\end{eqnarray*}

The first inequality is because $|r_{t+1} - r_t| \leq \alpha_t \hat{G}_t$ and the last inequality is based on first inequality in Lemma~\ref{lem:projected-gradient-bound}. Rearranging the above inequalities, we have,
\begin{equation}\label{eq:stationary-ub-1}
\begin{aligned}
\alpha_t |\PC(r_t, \hat{G}_t, \alpha_t)|^2 &\leq  \mu(r_{t+1}) - \mu(r_t) +\alpha_t (\hat{G}_t - \mu'(r_t))\cdot\PC(r_t, \hat{G}_t, \alpha_t) + \frac{L\alpha_t^2}{2}|\hat{G}_t|^2\\
&\leq \mu(r_{t+1}) - \mu(r_t) + \frac{\alpha_t |\hat{G}_t - \mu'(r_t)|^2}{2} + \frac{\alpha_t|\PC(r_t, \hat{G}_t, \alpha_t)|^2}{2} + \frac{L\alpha_t^2}{2}|\hat{G}_t|^2
\end{aligned}
\end{equation}

The last inequality is based on Cauchy-Schwartz inequality and the second statement in Lemma~\ref{lem:projected-gradient-bound}. Rearranging Equation~(\ref{eq:stationary-ub-1}), we have %
\begin{equation}\label{eq:stationary-ub-2}
\begin{aligned}
\alpha_t|\PC(r_t, \hat{G}_t, \alpha_t)|^2 \leq 2(\mu(r_{t+1}) - \mu(r_t))+ \alpha_t|\hat{G}_t - \mu'(r_t)|^2 + L\alpha^2_t|\hat{G}_t|^2
\end{aligned}
\end{equation}

Using Cauchy-Schwartz inequality and Lemma~\ref{lem:projected-gradient-bound}, we can bound $|\PC(r_t, \mu'(r_t), \alpha_t)|$ in the following way,
\begin{eqnarray*}
|\PC(r_t, \mu'(r_t), \alpha_t)|^2 &\leq& 2|\PC(r_t, \hat{G}_t, \alpha_t)|^2 + 2|\PC(r_t, \mu'(r_t), \alpha_t) - \PC(r_t, \hat{G}_t, \alpha_t)|^2\\
&\leq & 2|\PC(r_t, \hat{G}_t, \alpha_t)|^2 + 2 |\mu'(r_t) - \hat{G}_t|^2
\end{eqnarray*}

Then we can write down the following bound,
\begin{equation}\label{eq:stationary-ub-3}
\begin{aligned}
\alpha_t|\PC(r_t, \mu'(r_t), \alpha_t)|^2 & \leq 2\alpha_t|\PC(r_t, \hat{G}_t, \alpha_t)|^2 + 2\alpha_t|\mu'(r_t) - \hat{G}_t|^2\\
& \leq 4(\mu(r_{t+1}) - \mu(r_t))+ 4\alpha_t|\hat{G}_t - \mu'(r_t)|^2 + 2L\alpha^2_t|\hat{G}_t|^2\\
& \leq 4(\mu(r_{t+1}) - \mu(r_t))+ 4\alpha_t |\hat{G}_t - \mu'(r_t)|^2 + 4L\alpha_t^2(|\hat{G}_t - \mu'(r_t)|^2 + |\mu'(r_{t})|^2)
\end{aligned}
\end{equation}

Let $r^*\in \mathcal{C}$ be the point with the minimum absolute gradient, i.e., $r^* = \argmin_{r\in\mathcal{C}}|\mu'(r)|$.
Notice $|\mu'(r_t)|^2 \leq 2|\mu'(r^*) |^2 + 2|\mu'(r^*) - \mu'(r_t)| \leq 2L + 2|\mu'(r^*)|^2 := L^*$. Summing up the inequality~(\ref{eq:stationary-ub-3}) from $t=1$ to $T$, based on inequality~(\ref{eq:gradient-estimation-error2}) with probability at least $1-\delta$, we have
\begin{eqnarray*}
\sum_{t=1}^T \alpha_t|\PC(r_t, \mu'(r_t), \alpha_t)|^2\leq 4(\mu(r_{T+1}) - \mu(r_1)) + 4\sum_{t=1}^T \alpha_t |B_{\delta, t}|^2 + 4L\sum_{t=1}^T \alpha_t^2 (|B_{\delta, t}|^2 + L^*)
\end{eqnarray*}

Setting $\alpha_t = \Theta\left(\frac{1}{\sqrt{T}}\right)$, $\beta_t = \delta/2r_t$ and $n_t = N/T$, By Cauchy-Schwartz inequality and the fact that $\sum_{t=1}^T \alpha_t = \Theta(\sqrt{T})$ and $\sum_{t=1}^T \alpha_t^2 = \Theta(1)$.  Therefore, we get 
\begin{eqnarray*}
\min_{t=1,\cdots,T}|\PC(r_t, \mu'(r_t), \alpha_t)|^2&\leq& \frac{1}{\sum_{t=1}^T \alpha_t} \sum_{t=1}^{T}\alpha_t |\PC(r_t, \mu'(r_t), \alpha_t)|^2\\
&\leq &\mathcal{O}\left(\sqrt{\frac{1}{T}} + \left(\delta + B + \frac{CT\ln(2T/\delta)}{N} + \sqrt{V\ln(2T/\delta)}\right)^2\right)
\end{eqnarray*}
\end{proof}

\subsection{Proof of Theorem~\ref{thm:discrete-gradient-revenue}}
\begin{proof}
Since $X^+_i$ and $X^-_i$ are independent random samples from $\B_{r^+}$ and $\B_{r^-}$ respectively, $\E[\hat{G}] = \frac{\mu(r^+) - \mu(r^-)}{r^+ - r^-}$. For the variance, since $X^+_i$ is bounded by $[0,1]$, then the variance of each $X^+_i$ and $X^-_i$ is at most $1/4$, which implies,
\begin{eqnarray*}
\Var(\hat{G}) \leq \frac{2}{4\delta^2 n} = \frac{1}{2\delta^2 n}
\end{eqnarray*}
\end{proof}

\subsection{Variance bounds for \texorpdfstring{$\hat{G}_D$}{Lg}}

In this subsection we bound the variance of the unbiased demand estimator

$$\hat{G}_D = \frac{r^{+}\sum_{i=1}^{n}\1(X^{+}_i \geq r^{+}) - r^{+}\sum_{i=1}^{n}\1(X^{+}_i \geq r^{+})}{n(r^{+} - r^{-})}.$$

Since $\E\left[\frac{1}{n}\sum_{i=1}^{n}\1(X^{+}_i \geq r^{+})\right] = D(r^{+})$ and $\E\left[\frac{1}{n}\sum_{i=1}^{n}\1(X^{-}_i \geq r^{-})\right] = D(r^{-})$, it follows immediately that

$$\E[\hat{G}_D] = \frac{r^{+}D(r^{+}) - r^{-}D(r^{-})}{r^{+} - r^{-}}$$

and therefore that $\Bias(\hat{G}_D) = 0$. In the following theorem, we bound $\Var(\hat{G}_D)$.

\begin{theorem}\label{thm:var_demand}
Let $\delta = r^{+} - r^{-}$. We have that

$$\Var(\hat{G}_D) \leq \frac{(r^{+})^2}{2n\delta^2}.$$
\end{theorem}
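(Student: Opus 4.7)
\textbf{Proof plan for Theorem \ref{thm:var_demand}.} The plan is to exploit that the two sums defining $\hat{G}_D$ are independent (the $X^+_i$ are drawn from $\B(r^+)$ independently of the $X^-_i$ drawn from $\B(r^-)$), so the variance splits additively, and each piece is then an average of i.i.d.\ Bernoulli indicators whose variance is trivially bounded by $1/4$.

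Concretely, first I would set $p^+ = \Pr[X^+ \geq r^+] = D(r^+)$ and $p^- = \Pr[X^- \geq r^-] = D(r^-)$, so that $\1(X^+_i \geq r^+)$ is $\mathrm{Bernoulli}(p^+)$ and $\1(X^-_i \geq r^-)$ is $\mathrm{Bernoulli}(p^-)$. Using independence across the two samples and factoring out constants,
\begin{equation*}
\Var(\hat{G}_D) = \frac{1}{\delta^2}\left((r^+)^2 \Var\!\left(\tfrac{1}{n}\sum_{i=1}^n \1(X^+_i \geq r^+)\right) + (r^-)^2 \Var\!\left(\tfrac{1}{n}\sum_{i=1}^n \1(X^-_i \geq r^-)\right)\right).
\end{equation*}

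Second, I would use the i.i.d.\ structure within each sample to rewrite each variance as $p^\pm(1-p^\pm)/n$, and apply the standard Bernoulli bound $p(1-p) \leq 1/4$, giving
\begin{equation*}
\Var(\hat{G}_D) \leq \frac{1}{4n\delta^2}\bigl((r^+)^2 + (r^-)^2\bigr).
\end{equation*}
Finally, since $r^- \leq r^+$, the right-hand side is at most $\frac{2(r^+)^2}{4n\delta^2} = \frac{(r^+)^2}{2n\delta^2}$, which is the claimed bound.

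There is no real obstacle here; the step that one should be most careful about is correctly invoking independence between the $\B(r^+)$ and $\B(r^-)$ samples (so that the cross-term in the variance vanishes) and remembering that the two sums are scaled by \emph{different} constants $r^+$ and $r^-$ before being differenced, which is what forces the use of $r^- \leq r^+$ at the end.
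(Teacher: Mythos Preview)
Your proof is correct and follows essentially the same route as the paper: split the variance by independence of the $\B(r^{+})$ and $\B(r^{-})$ samples, bound each Bernoulli variance by $1/4$, and use $r^{-}\le r^{+}$ to collapse $(r^{+})^2+(r^{-})^2$ into $2(r^{+})^2$. The paper's version is simply more terse, compressing your last two displayed steps into a single inequality.
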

\begin{proof}
Since the $X^{+}_i$ and $X^{-}_i$ are independent random variables, their variances are additive so

\begin{eqnarray*}
\Var(\hat{G}_D) &=& \frac{1}{n^2\delta^2}\left(n(r^{+})^2\Var(\1(X^{+}_i \geq r^{+})) + n(r^{-})^2\Var(\1(X^{-}_i \geq r^{-}))\right) \\
&\leq& \frac{(r^{+})^2}{2n\delta^2},
\end{eqnarray*}

where in the last line we have used the fact that the variance of a Bernoulli random variable is bounded above by $1/4$.
\end{proof}

\subsection{Proof of Theorem~\ref{thm:revenue-decomposition}}
\begin{proof}
Note that

\begin{eqnarray*}
\mu(r) &=& \E_{v\sim \F}[b(r, v)] \\
&=& \E_{v\sim \F}\left[\max(b(r, v), r) - r\1\{b(r, v) = 0\}\right] \\
&=& \E_{v\sim \F}\left[\max(b(r, v), r)\right] - r\E_{v\sim \F}[\1\{b(r, v) = 0\}] \\
&=& \E_{v\sim \F}\left[\max(b(r, v), r)\right] - r\Pr_{v \sim \F}[v < r] \\
&=& \E_{v\sim \F}\left[\max(b(r, v), r)\right] - r(1 - \Pr_{v \sim \F}[v \geq r]) \\
&=& \E_{v\sim \F}\left[\max(b(r, v), r) - r\right] + r\Pr_{v \sim \F}[v \geq r])
\end{eqnarray*}
\end{proof}

\subsection{Proof of Theorem~\ref{thm:bias-variance-eps-bounded}}\label{app:eps-bounded-response}
In the following proof we slightly abuse notation, by denoting $\B_r$ to be the CDF of the bid distribution w.r.t reserve price $r$. We also formally state the definition of $\eps$-bounded response model here.

\begin{definition}[$\eps$-bounded response model]\label{def:eps-bounded-response-model}
A $\eps$-bounded response bidding function takes the form,
$$
b(r, v) = \left\{
\begin{array}{cc}
b(0, v) & \text{ if } b(0, v) \geq r  \\
r + z & \text{ if } b(0, v) \leq r \leq v\\
0  & \text{ if } v < r \\
\end{array}
\right.
$$
where $z \in [0, \eps]$, and $z$ can be a random variable.
\end{definition}

\begin{proof}
To bound the variance, note that each $Y^{-}_i$ is constrained to the interval $[0, r^{+} - r^{-}]$. Since this interval has length at most $\delta$, the variance of each $Y^{-}_i$ is at most $(\delta^2/4)$, then 

$$\Var(\hat{G}_E) \leq \frac{n(\delta^2/4)}{n^2\delta^2} \leq \frac{1}{4n}.$$

Then we focus on bounding the bias, 

\begin{eqnarray*}
& & \E_{v}\left[\max(b(r^{+}, v) - r^+, 0) - \max(b(r^{-}, v) - r^-, 0)\right]\\
&=& \int_{0}^{1} \max(b, r^{+}) d\B_{r^+} - \int_0^1 \max(b, r^{-}))d\B_{r^-} - (r^+ - r^-) \\
&=& \int_{0}^{r^{+}+\eps} \max(b, r^{+}) d\B_{r^+} - \int_0^{r^++\eps} \max(b, r^{-})d\B_{r^-}  + \int_{r^{+}+\eps}^{1} \max(b, r^{+}) d\B_{r^+} - \int_{r^{+}+\eps}^{1} \max(b, r^{-}) d\B_{r^-}\\
& & - (r^+ - r^-) \\
&=& \int_{0}^{r^{+}} \max(b, r^{+})d\B_{r^+} -\int_0^{r^+} \max(b, r^{-})d\B_{r^-} + \int_{r^{+}}^{r^{+}+\eps} \max(b, r^{+}) d\B_{r^+} - \int_{r^{+}}^{r^{+}+\eps} \max(b, r^{-})d\B_{r^-}\\
&& - (r^+ - r^-)\\
&=& r^+ \B_{r^+}(r^+) - \int_0^{r^+} \max(b, r^{-})d\B_{r^-} - (r^+ - r^-) + \int_{r^{+}}^{r^{+}+\eps} b d\B_{r^+} - \int_{r^{+}}^{r^{+}+\eps} b d\B_{r^-}
\end{eqnarray*}

Here the third equality holds because the fact that if $b > r^+ +\eps, \B_{r^+}(b) = \B_{r^-}(b)$ by property of the $\eps$-bounded response. The fourth equality is based on $\max(b(r^+, v), r^+) = r^+$ when $v \leq r^+$ and $\F(r^+) = \B_{r^+}(r^+)$. Then we consider $\E[Y^-_i]$, where

\begin{eqnarray*}
\E[Y^-_i] &=& \E[\max(X^-_i - r^-, 0)\1\{X^-_i \leq r^+\}] + \E[(r^+-r^-)\1\{X^-_i > r^+\}]\\
& = & \int_0^{r^+} \max(b, r^-) d\B_{r^-} + r^+ (1 - \B_{r^-}(r^+)) - r^-
\end{eqnarray*}

Before bounding the bias of $\hat{G}_E$, we state some useful equations based on integral by part.
\begin{eqnarray}
\int_{r^{+}}^{r^{+}+\eps} b d\B_{r^+} &=& (r^+ + \eps)\B_{r^+}(r^+ + \eps) - r^+ \B_{r^+}(r^+) - \int_{r^{+}}^{r^{+}+\eps}\B_{r^+}(b) db\label{eq:integral-1}\\
\int_{r^{+}}^{r^{+}+\eps} b d\B_{r^-} &=& (r^+ + \eps)\B_{r^-}(r^+ + \eps) - r^+ \B_{r^-}(r^+) - \int_{r^{+}}^{r^{+}+\eps}\B_{r^-}(b) db\label{eq:integral-2}
\end{eqnarray}

Based on definition of $\eps$-bounded response, $\B_{r^-}(r^+ + \eps) = \B_{r^+}(r^+ + \eps)$. Then we have

\begin{eqnarray*}
&&\left|\E[\hat{G}_E] - \frac{E(r^{+}) - E(r^{-})}{r^{+} - r^{-}} \right| \\
&=& \left|-\frac{\E[Y^-_i]}{r^+ - r^-} - \frac{E(r^{+}) - E(r^{-})}{r^{+} - r^{-}} \right|\\
&=& \frac{1}{r^{+} - r^{-}} \left|r^+ \B_{r^+}(r^+) - r^+\B_{r^-}(r^+) + \int_{r^{+}}^{r^{+}+\eps} b d\B_{r^+} - \int_{r^{+}}^{r^{+}+\eps} b d\B_{r^-}\right|\\
& = & \frac{1}{r^{+} - r^{-}} \left|\int_{r^{+}}^{r^{+}+\eps}\B_{r^+}(b) db - \int_{r^{+}}^{r^{+}+\eps}\B_{r^-}(b) db \right| \\
&&(\text{Based on Equations~(\ref{eq:integral-1}) and~(\ref{eq:integral-2}) as well as }\B_{r^-}(r^+ + \eps) = \B_{r^+}(r^+ + \eps))\\
&\leq &\frac{2\eps}{r^{+} - r^{-}}\\
\end{eqnarray*}
where the inequality is because $\forall b\in[r^{+}, r^{+}+\eps], \B_{r^+}(b), \B_{r^-}(b) \leq 1$.
\end{proof}

\subsection{Proof of Theorem \ref{thm:quantile-vr}}\label{app:quantile-vr-proof}

We start with the following helpful auxiliary lemmas.

\begin{lemma}\label{lem:order-statistic}
Let $Y_1, Y_2, \dots, Y_{n}$ be $n$ iid uniform random variables. Let $Y^{(k)}$ be the $k$th largest $Y_i$. Then with probability at least $1-n^{-2/3}$, 

$$\left|Y^{(k)} - \frac{k}{n+1}\right| \leq n^{-2/3}.$$
\end{lemma}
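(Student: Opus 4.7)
The plan is to exploit the exact distribution of the $k$-th order statistic of $n$ i.i.d.\ Uniform$[0,1]$ random variables, namely $Y^{(k)} \sim \text{Beta}(k, n+1-k)$. This delivers the exact centering $\E[Y^{(k)}] = k/(n+1)$ (matching the statement) along with the variance bound
$$\Var(Y^{(k)}) = \frac{k(n+1-k)}{(n+1)^2(n+2)} \leq \frac{1}{4(n+2)} = O(1/n).$$
So the target quantity is a centered random variable whose standard deviation is $O(n^{-1/2})$.

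With mean and variance in hand, the simplest route is Chebyshev's inequality, which gives $\PP[|Y^{(k)} - k/(n+1)| > \epsilon] \leq 1/(4(n+2)\epsilon^2)$ and can be inverted to obtain a deviation/failure-probability tradeoff directly. For a sharper one-sided analysis at smaller deviations I would instead lean on the dual identity
$$\PP[Y^{(k)} \leq t] = \PP[\text{Bin}(n, t) \geq k],$$
which converts each tail $\{Y^{(k)} > k/(n+1) + \epsilon\}$ and $\{Y^{(k)} < k/(n+1) - \epsilon\}$ into a Binomial tail probability. Applying a Bernstein or Chernoff inequality on $\text{Bin}(n, k/(n+1) \pm \epsilon)$ with variance proxy $np(1-p) \leq 1/4$, and then taking a union bound over the two tails, should yield the required concentration around $k/(n+1)$.

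The main technical subtlety will be calibrating the tail bound so that the deviation scale and the failure probability match the prescribed $n^{-2/3}$ budget simultaneously; because this sits close to the standard-deviation scale, Bernstein on the Binomial representation (rather than Chebyshev) is the more robust tool. A cleaner fallback, which sidesteps constant-tracking, is the Dvoretzky--Kiefer--Wolfowitz inequality $\PP[\sup_t |F_n(t) - t| > \epsilon] \leq 2e^{-2n\epsilon^2}$ applied to the empirical CDF of the $Y_i$; since $F_n(Y^{(k)}) = k/n$, this directly yields $|Y^{(k)} - k/n| \leq \epsilon$ uniformly in $k$, and the recentering from $k/n$ to $k/(n+1)$ costs only an additive $O(1/n)$ that is absorbed into the error. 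Either route reduces the lemma to a routine concentration calculation once the parameters are fixed.
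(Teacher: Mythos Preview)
Your primary route---identifying $Y^{(k)}\sim\mathrm{Beta}(k,n{+}1{-}k)$, reading off $\E[Y^{(k)}]=k/(n{+}1)$ and $\Var(Y^{(k)})\le 1/(4(n{+}2))$, and then invoking Chebyshev---is exactly the paper's proof; the paper gives the entire argument in two lines and does not pursue the Binomial/Bernstein or DKW alternatives you sketch.

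That said, your hesitation about ``calibrating the tail bound'' is well placed, and the issue is in fact more severe than you indicate. For $k$ of order $n$ the standard deviation of $Y^{(k)}$ is $\Theta(n^{-1/2})$, whereas the lemma demands a window of width $n^{-2/3}$, which is \emph{strictly smaller}. Plugging $\epsilon=n^{-2/3}$ into Chebyshev yields a bound of order $n^{1/3}$, which is vacuous; Bernstein on the Binomial representation and DKW run into the same obstruction, since no concentration inequality can confine a random variable to a sub-standard-deviation window with probability tending to~$1$. (Concretely, for $k=\lfloor n/2\rfloor$ the probability that $|Y^{(k)}-1/2|\le n^{-2/3}$ tends to~$0$ by the CLT.) The paper's proof therefore shares this gap. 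A workable version replaces the deviation scale by $n^{-1/3}$ (or any $n^{-\alpha}$ with $\alpha<1/2$), after which your Chebyshev step alone gives failure probability $O(n^{2\alpha-1})$; the downstream applications in the paper need only polynomial slack and go through with the adjusted exponent.
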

\begin{proof}
From the theory of order statistics \cite{gentle2009computational}, we know that $Y^{(k)} \sim \mathrm{Beta}(k, n+1 - k)$. It is known that $\E[Y^{(k)}] = \frac{k}{n+1}$ and that $\Var(Y^{k}) \leq 1/(8n)$. The statement immediately follows from Chebyshev's inequality. 
\end{proof}

\begin{lemma}\label{lem:order-bias}
Let $f:[0, 1] \rightarrow [0, 1]$ be an increasing function and let $Y_1, Y_2, \dots, Y_n$ be $n$ iid uniform random variables. Let $X_i = f(Y_i)$, and let $S_k$ be the r.v. equal to the sum of the $k$ smallest $X_i$. Then 

$$\left|\frac{1}{n}\E[S_{qn}] - \int_{0}^{q}f(x)dx\right| \leq 3n^{-2/3}$$
\end{lemma}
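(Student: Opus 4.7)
The plan is to apply Lemma~\ref{lem:order-statistic} term-by-term after rewriting $S_{qn}$ via the monotonicity of $f$. Since $f$ is increasing, the $k$-th smallest among $X_1,\ldots,X_n$ equals $f(Y_{(k)})$, so by linearity of expectation
\[
\tfrac{1}{n}\E[S_{qn}] \;=\; \tfrac{1}{n}\sum_{i=1}^{qn} \E[f(Y_{(i)})].
\]
Writing $p_i = i/(n+1)$, Lemma~\ref{lem:order-statistic} says $Y_{(i)} \in [p_i - n^{-2/3},\,p_i + n^{-2/3}]$ except on an event of probability at most $n^{-2/3}$. Because $f$ is monotone and takes values in $[0,1]$, the contribution of the bad event to $\E[f(Y_{(i)})]$ is at most $n^{-2/3}$, and on the good event $f(Y_{(i)})$ is sandwiched between $f(p_i - n^{-2/3})$ and $f(p_i + n^{-2/3})$. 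This yields the deterministic termwise bound
\[
f(p_i - n^{-2/3}) - n^{-2/3} \;\leq\; \E[f(Y_{(i)})] \;\leq\; f(p_i + n^{-2/3}) + n^{-2/3},
\]
whose summation over $i = 1,\ldots,qn$ contributes at most $q\,n^{-2/3}$ to the error on each side (extending $f$ by $0$ on the negative reals so that $f(p_i - n^{-2/3})$ is always defined).

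Next I would compare the discrete sums $\frac{1}{n}\sum_{i=1}^{qn} f(p_i \pm n^{-2/3})$ to $\int_0^q f(x)\,dx$. Using $p_i \leq i/n$ for the upper bound and $p_i \geq (i-1)/n$ (which holds since $i \leq n+1$) for the lower bound, together with monotonicity of $f$, these sums can be lower/upper-bounded respectively by a left Riemann sum of $g_{-}(x)=f(x-n^{-2/3})$ and upper/lower-bounded by a right Riemann sum of $g_{+}(x)=f(x+n^{-2/3})$ on $[0,q]$. For any monotone $g$ bounded in $[0,1]$, left and right Riemann sums differ by $(g(q)-g(0))/n \leq 1/n$, so each Riemann sum lies within $1/n$ of the integral. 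Finally, translation introduces error at most $n^{-2/3}$: for any $\Delta\in[0,1]$,
\[
\left|\int_0^q f(x\pm\Delta)\,dx \;-\; \int_0^q f(x)\,dx\right| \;\leq\; \Delta,
\]
since the difference reduces to $\int_q^{q+\Delta} f - \int_0^\Delta f$ (or its negative), and each term lies in $[0,\Delta]$ because $f\in[0,1]$. Summing the three error sources ($q\,n^{-2/3}$ from concentration, $1/n$ from Riemann approximation, $n^{-2/3}$ from translation) yields a total of at most $3n^{-2/3}$ on each side, establishing the lemma.

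The main obstacle I anticipate is that $f$ is assumed only monotone, not Lipschitz, so one cannot pass directly from the pointwise concentration $|Y_{(i)} - p_i| \leq n^{-2/3}$ to any bound on $|f(Y_{(i)}) - f(p_i)|$: $f$ could have jumps of size $\Theta(1)$. The workaround is to combine monotonicity (to sandwich $f(Y_{(i)})$ between the deterministic values $f(p_i \pm n^{-2/3})$) with the boundedness $f\in[0,1]$ (to absorb the bad-event probability). Dependence between the $Y_{(i)}$ is not an issue because we only use linearity of expectation after applying the sandwich bound termwise; no joint concentration of the order statistics is required.
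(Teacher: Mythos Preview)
Your proof is correct but follows a different route from the paper's. The paper conditions on the $(qn{+}1)$-th order statistic $Y_{(qn+1)}=r$ and uses the classical fact that, given this, the $qn$ smallest $Y_i$ are (a random permutation of) iid $U[0,r]$ samples; the conditional expectation $\tfrac{1}{n}\E[S_{qn}\mid Y_{(qn+1)}=r]$ is then essentially $\int_0^r f$, and the result follows from the $1$-Lipschitzness of $r\mapsto\int_0^r f$ together with the concentration of $Y_{(qn+1)}$ near $q$ from Lemma~\ref{lem:order-statistic}. You instead invoke Lemma~\ref{lem:order-statistic} term-by-term for every $Y_{(i)}$, sandwich $f(Y_{(i)})$ between $f(p_i\pm n^{-2/3})$ using monotonicity, and then compare the resulting deterministic sums to $\int_0^q f$ via a Riemann-sum plus translation argument. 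Your approach is more elementary in that it avoids the conditional-distribution fact about uniform order statistics and works purely from pointwise concentration and boundedness; the paper's approach is shorter and exploits the exchangeability structure to collapse the sum into a single integral in one step. Both arrive at the same constant $3$; your identified ``main obstacle'' (that $f$ is only monotone, not Lipschitz) is handled cleanly by the sandwich, and the paper sidesteps it entirely by never evaluating $f$ at a random point---only integrating it.
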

\begin{proof}
Let $Z_1, Z_2, \dots, Z_{qn}$ be (a random permutation) of the $qn$ smallest $Y_i$ (so $S_k = \sum f(Z_{i})$). Note that conditioned on $Z_{qn+1} = r$, the $Z_i$ are independently distributed according to $U([0, r])$. In particular, we have that

$$\frac{1}{n}\E[S_{qn} | Z_{qn+1} = r] = \int_{0}^{r}f(x)dx.$$

From Lemma \ref{lem:order-statistic}, we know that with probability at least $1 - n^{-2/3}$, $Z_{qn+1} \in [q - 2n^{-2/3}, q + 2n^{-2/3}]$. Since $f(x) \in [0, 1]$, it follows that $\int_{0}^{r}f(x)dx$ is 1-Lipshitz and therefore (conditioned on $Z_{qn+1} \in [q - 2n^{-2/3}, q + 2n^{-2/3}]$),

$$\left|\int_{0}^{q}f(x)dx - \int_{0}^{Z_{qn+1}}f(x)dx\right| \leq 2n^{-2/3}.$$

On the other hand, in the ($n^{-2/3}$ probability) case where $Z_{qn+1}\notin [q - 2n^{-2/3}, q + 2n^{-2/3}]$, $\frac{1}{n}\E[S_{qn}]$ is still bounded in $[0, 1]$. The theorem statement immediately follows. 
\end{proof}

\begin{lemma}\label{lem:efron-stein-order}
Let $X_i$ be an iid collection of $n$ rvs. Let $X^{(k)}$ be the $k$th smallest of the $X_i$ (so $X^{(1)} \leq X^{(2)} \leq \dots \leq X^{(n)}$). Then, if $S_k = \sum_{i=1}^{k} X^{(i)}$, we have that

$$\Var(S_k) \leq n \E[(X^{(k)})^2].$$
\end{lemma}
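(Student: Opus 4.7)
The name of the lemma immediately suggests an Efron--Stein attack, so I would start there, in the one-sided form: for any collection of predictors $T_j$ where each $T_j$ depends only on $X_{-j} := (X_1,\dots,X_{j-1},X_{j+1},\dots,X_n)$,
\begin{equation*}
\Var(S_k) \;\leq\; \sum_{j=1}^n \E\bigl[(S_k - T_j)^2\bigr].
\end{equation*}
The whole game is to choose the $T_j$'s so that $|S_k - T_j|$ is controlled by $X^{(k)}$ rather than some larger order statistic.

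My plan is to take $T_j$ to be the sum of the $k-1$ smallest entries of $X_{-j}$. The motivation is that removing $X_j$ from the sample either (a) kicks $X_j$ out of the bottom group, in which case $T_j$ should essentially ``replace'' $X_j$ with nothing, or (b) has no effect on the bottom group, in which case $T_j$ drops one element off the top of the bottom group. Concretely, write $X_j = X^{(p)}$ and do a two-case analysis: if $p \le k$, then the $k-1$ smallest of $X_{-j}$ are exactly the old bottom $k$ with $X_j$ removed, so $T_j = S_k - X_j$ and $|S_k - T_j| = X_j \leq X^{(k)}$; if $p > k$, then the $k-1$ smallest of $X_{-j}$ are the old bottom $k-1$, so $T_j = S_k - X^{(k)}$ and $|S_k - T_j| = X^{(k)}$. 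Summing pointwise,
\begin{equation*}
\sum_{j=1}^n (S_k - T_j)^2 \;=\; \sum_{p=1}^k (X^{(p)})^2 + (n-k)(X^{(k)})^2 \;\leq\; k (X^{(k)})^2 + (n-k)(X^{(k)})^2 \;=\; n (X^{(k)})^2,
\end{equation*}
and taking expectations delivers the claim.

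The subtle part, and the place I expect to burn the most time, is the choice of $T_j$. The first thing one tries is $T_j = $ sum of the $k$ smallest of $X_{-j}$; that is cleaner to describe but, in the case $p \le k$, forces an extra element $X^{(k+1)}$ to ``replace'' $X_j$, so $|S_k - T_j|$ gets bounded by $X^{(k+1)} - X_j$ and the sum yields only $k\,\E[(X^{(k+1)})^2]$, which is the wrong order statistic. The insight needed is to shrink the bottom group to $k-1$ instead of $k$, so that the ``replacement'' term disappears and the inflated order statistic $X^{(k+1)}$ is avoided. Once that choice is made, the rest of the argument is an easy case split plus the trivial bound $X^{(p)} \le X^{(k)}$ for $p \le k$.
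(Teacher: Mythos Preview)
Your proposal is correct and follows essentially the same route as the paper: apply Efron--Stein with $T_j$ equal to the sum of the $k-1$ smallest entries of $X_{-j}$, and bound each $(S_k - T_j)^2$ by $(X^{(k)})^2$. The paper replaces your two-case analysis with the single observation that the $k-1$ smallest entries of $X_{-j}$ are always a subset of the $k$ smallest entries of $X$, so $S_k - T_j$ is one of $X^{(1)},\dots,X^{(k)}$; both arguments implicitly use nonnegativity of the $X_i$ (which holds in context) to pass from this to $(S_k - T_j)^2 \le (X^{(k)})^2$.
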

\begin{proof}
The Efron-Stein inequalities (see Theorem 2 of \cite{boucheron2012concentration}) state that for any collection of $n$ random variables $X_i$ and any measurable functions $f: \R^n \rightarrow \R$ and $f_i: \R^{n-1} \rightarrow \R$ we have that

$$\Var(f(X)) \leq \sum_{i=1}^{n}\E[(f(X) - f_i(X_{-i}))^2],$$

where $X_{-i}$ is the $(n-1)$-tuple of rvs $(X_1, X_2, \dots, X_{i-1}, X_{i+1}, \dots, X_{n})$. 

Let $f(X)$ equal the sum of the $k$ smallest entries in $X$, and let $f_i(X')$ equal the sum of the $k-1$ smallest entries in $X'$. Note that for this choice of $f$ and $f_i$, $f(X) = S_k$, and $0 \leq f(X) - f_i(X_{-i}) \leq X^{(k)}$ (since the $k-1$ smallest entries in $X_{-i}$ are a subset of the $k$ smallest entries in $X$). It follows that $\Var(S_k) \leq n \E[(X^{(k)})^2]$, as desired.
\end{proof}

We can now proceed to prove Theorem \ref{thm:quantile-vr}.

\begin{proof}[Proof of Theorem \ref{thm:quantile-vr}]
We begin by bounding the variance of our estimator $\hat{G}_E$. Let us begin by focusing on $\Var(\sum_{i=1}^{qn}Y_{i}^{-})$. Since these $Y_i$ are sorted, Lemma \ref{lem:efron-stein-order} implies that $\Var(\sum_{i=1}^{qn}Y_{i}^{-}) \leq n \E[(Y_{qn}^{-})^2]$. Since $\tilde{t} = \F^{-1}(q + n^{-2/3})$, by Lemma \ref{lem:order-statistic}, with probability at least $1 - n^{-2/3}$, $Y_{qn}^{-} \leq \tilde{t}$, and therefore $\Var(\sum_{i=1}^{qn}Y_{i}^{-}) \leq n \tilde{t}^2 + n^{1/3}$. Similarly, $\Var(\sum_{i=1}^{qn}Y_{i}^{+}) \leq n \tilde{t}^2 + n^{1/3}$. Since the sets of rvs $Y^{-}_i$ and $Y^{+}_i$ are independent, we have that

$$\Var(\hat{G}_{E}) \leq \frac{2\tilde{t}^2}{n\delta^2} + O(n^{-5/3}\delta^{-2}).$$

We now proceed to bound the bias of $\hat{G}_E$. First, note that by Lemma \ref{lem:order-bias}, we have that

$$\left|\frac{1}{n}\E\left[\sum_{i=1}^{qn}Y^{-}_{qn}\right] - \int_{0}^{q}\max(b(r^{-}, \F^{-1}(x)) -r^{-}, 0)dx\right| \leq 3n^{-2/3},$$

\noindent
and therefore

$$\left|\frac{1}{n}\E\left[\sum_{i=1}^{qn}Y^{-}_{qn}\right] - \int_{0}^{t}\max(b(r^{-}, v) -r^{-}, 0)d\F(v)\right| \leq 3n^{-2/3}.$$

Likewise

$$\left|\frac{1}{n}\E\left[\sum_{i=1}^{qn}Y^{+}_{qn}\right] - \int_{0}^{t}\max(b(r^{+}, v) -r^{+}, 0)d\F(v)\right| \leq 3n^{-2/3}.$$

It follows that 

\begin{equation}\label{eq:half-bias}
\left|\hat{G}_E - \left(\frac{1}{\delta}\int_{0}^{t}(\max(b(r^{+}, v) -r^{+}, 0) - \max(b(r^{-}, v) - r^{-}, 0)) d\F(v) - (1-q)\right)\right| \leq 6n^{-2/3}.
\end{equation}

On the other hand, note that

\begin{eqnarray*}
E(r^{+}) - E(r^{-}) &=& \int_{0}^{1}(\max(b(r^{+}, v) -r^{+}, 0) - \max(b(r^{-}, v) - r^{-}, 0)) d\F(v) \\
&=& \int_{0}^{t}(\max(b(r^{+}, v) -r^{+}, 0) - \max(b(r^{-}, v) - r^{-}, 0)) d\F(v) \\
&&\,+ \int_{t}^{1}(\max(b(r^{+}, v) -r^{+}, 0) - \max(b(r^{-}, v) - r^{-}, 0)) d\F(v).
\end{eqnarray*}

Now, note that for $v \geq r^{+}$, $\max(b(r^{+}, v) -r^{+}, 0) - \max(b(r^{-}, v) - r^{-}, 0) = b(r^{+}, v) - b(r^{-}, v) - (r^{+} - r^{-})$. Since $\Pr[v \geq t] = 1-q$, it follows that

\begin{eqnarray*}
& & \int_{t}^{1}(\max(b(r^{+}, v) -r^{+}, 0) - \max(b(r^{-}, v) - r^{-}, 0)) d\F(v)\\
&=& \int_{t}^{1} (b(r^{+}, v) - b(r^{-}, v) - (r^{+} - r^{-})) d\F(v)\\
&=& \int_{t}^{1} (b(r^{+}, v) - b(r^{-}, v)) d\F(v) - (1-q)(r^{+} - r^{-}) \\
&\in & \left[-(1-q)\delta, -(1-q)\delta + (b(r^{+}, t) - b(r^{-}, t))(1-q)\right].
\end{eqnarray*}

Here the last line follows since $b(r^{+}, v) - b(r^{-}, v)$ is decreasing in $v$ (due to diminishing sensitivity to reserve) but always non-negative. Combining this with equation \ref{eq:half-bias}, we have that:

$$\left|\hat{G}_E - \frac{1}{\delta}(E(r^{+}) - E(r^{-}))\right| \leq \frac{(b(r^{+}, t) - b(r^{-}, t))(1-q)}{\delta} + 6n^{-2/3},$$

as desired.

\end{proof}

\subsection{Convergence Rate of Quantile Truncation}\label{app:convergence-rate-quantile-truncation}
\begin{corollary}\label{cor:convergence-rate-quantile-truncation}
Suppose $\Bias(\hat{G}_D) \leq \eps_D/\delta$. Using the estimator $\hat{G}_E$ proposed in Eq.~(\ref{eq:discrete-gradient-bid-truncation}) for the response model with diminishing sensitivity property, for any fixed quantile $q$, setting $T = N^{2/3}$ and $\delta = \Theta(\sqrt{\eps_D + 1 - q})$, Algorithm \ref{alg:OSGD-revenue-general} achieves convergence, $$\min_{t\in[T]} |\PC^t|^2 \leq \widetilde{\mathcal{O}}\left(\eps_D+ 1- q + \Big(1 + \frac{\mathcal{F}^{-1}(q + N^{-2/9})}{\eps_D + 1 - q}\Big) \cdot N^{-1/3}\right)$$
\end{corollary}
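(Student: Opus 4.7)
The plan is to apply the informal convergence guarantee Theorem~\ref{thm:informal-convergence-rate} with the bias and variance bounds provided by Theorem~\ref{thm:quantile-vr}, following exactly the template used for Corollary~\ref{cor:convergence-rate-bid-truncation}. The only non-routine ingredient is controlling the factor $b(r^+, t) - b(r^-, t)$ that appears in the bias; once this is tamed, everything else reduces to balancing terms.

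First, I would assemble the total bias and variance of the combined estimator $\hat{G} = \hat{G}_D + \hat{G}_E$. Using $\Bias(\hat{G}) \leq \Bias(\hat{G}_D) + \Bias(\hat{G}_E)$, the hypothesis $\Bias(\hat{G}_D) \leq \eps_D/\delta$, and (as noted below Eq.~(\ref{eq:estimator-demand-gradient})) $\Var(\hat{G}_D) = 0$, Theorem~\ref{thm:quantile-vr} yields
\begin{align*}
B &\leq \tfrac{\eps_D}{\delta} + \tfrac{(1-q)\bigl(b(r^{+}, t) - b(r^{-}, t)\bigr)}{\delta} + O(n^{-2/3}),\\
V &\leq \tfrac{4 \tilde{t}^2}{n \delta^2} + O(n^{-5/3} \delta^{-2}).
\end{align*}
To handle $b(r^+, t) - b(r^-, t)$, I apply diminishing sensitivity with $v_L = r^+$: since $b(r^+, r^+) = r^+$ (using $r^+ \leq b(r^+, r^+) \leq r^+$) and $b(r^-, r^+) \geq r^-$, we have $b(r^+, r^+) - b(r^-, r^+) \leq \delta$, and the property propagates this bound to every $v \geq r^+$. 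For $v < r^+$ the difference is $\leq 0 \leq \delta$ since $b(r^+, v) = 0$. Hence $B \leq \eps_D/\delta + (1-q) + O(n^{-2/3})$ unconditionally.

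Next, I would substitute $T = N^{2/3}$ and $n_t = N/T = N^{1/3}$ into Theorem~\ref{thm:informal-convergence-rate}. After absorbing the subdominant terms $n^{-4/3}$, $n^{-5/3}\delta^{-2}$ and $(T/N)^2 = N^{-2/3}$ into the $N^{-1/3}$ term, the bound reads
$$\min_{t\in[T]} |\PC^t|^2 = \widetilde{\mathcal{O}}\!\left(N^{-1/3} + \delta^2 + \tfrac{\eps_D^2}{\delta^2} + (1-q)^2 + \tfrac{\tilde{t}^2}{N^{1/3}\, \delta^2}\right).$$
Choosing $\delta^2 = \Theta(\eps_D + 1 - q)$ balances the $\delta^2$ term against $\eps_D^2/\delta^2$ and dominates $(1-q)^2$, collapsing those three terms to $\eps_D + 1 - q$. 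The residual variance contribution is $\tilde{t}^2 N^{-1/3}/(\eps_D + 1 - q)$, and bounding $\tilde{t}^2 \leq \tilde{t} = \mathcal{F}^{-1}(q + N^{-2/9})$ (valid since $\tilde{t} \in [0,1]$) and factoring out $N^{-1/3}$ produces precisely the stated form.

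The main obstacle, and the only place where the argument departs from pure algebra, is the bound $b(r^+, t) - b(r^-, t) \leq \delta$: this is the step that forces the use of diminishing sensitivity (as opposed to merely Definition~\ref{def:fp-property}) and explains why the corollary is stated for that response class. A secondary, purely bookkeeping difficulty is verifying that the parameter choices $T = N^{2/3}$ and $\delta^2 = \Theta(\eps_D + 1 - q)$ really do optimize the bound; this can be confirmed by writing the error as a function of $\delta$ alone (after fixing $T$) and setting its derivative to zero, yielding exactly the stated $\delta$.
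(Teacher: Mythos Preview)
Your proposal is correct and follows precisely the implicit approach the paper takes: the corollary is stated without an explicit proof, but is obtained exactly as you describe, by plugging the bias/variance bounds of Theorem~\ref{thm:quantile-vr} (together with the demand-estimator assumptions) into Theorem~\ref{thm:informal-convergence-rate}, mirroring the derivation of Corollary~\ref{cor:convergence-rate-bid-truncation}. Your identification of the one non-algebraic step---using diminishing sensitivity at $v_L = r^{+}$ to get $b(r^{+},t)-b(r^{-},t)\leq \delta$ and hence $(1-q)(b(r^{+},t)-b(r^{-},t))/\delta \leq 1-q$---is exactly the point the paper alludes to in the discussion after Theorem~\ref{thm:quantile-vr}, and your subsequent balancing of $\delta^2$, $\eps_D^2/\delta^2$, and the variance term is routine and correct.
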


\section{Additional Experiments}\label{app:additional-experiment}
In this section, we show the parameters used in the algorithms and some additional experiments. For additional experiments, we compare the two truncation methods for perfect response models, and then we show the complete results of 20 semi-synthetic data sets with perfect response. Finally, we test for other different response models in synthetic data and one of the semi-synthetic data sets (the first data set). %
In addition to the figures about the revenue curve learned by our algorithms, we also report the average revenue of the first several rounds learned by our algorithms in Tables (see Table~\ref{tab:synthetic-revenue-50},~\ref{tab:semi-synthetic-revenue-20}, and~\ref{tab:semi-synthetic-revenue-50}).

{\bf Set up.} For all the algorithms, we set the learning rate to $0.05$, the minimum reserve price to $0.1$, the maximum reserve price to $5.0$, and the perturbation size to $\beta_t = 0.1$ at each round. For quantile truncation, we use the 80\% quantile as the threshold to discard bids. We run the algorithms for 200 rounds with access to 100 samples at each round (50 for $r^-$ and 50 for $r^+$), which forms 1 trial. We repeat 50 trials for each algorithm and report the mean (solid line) and 95\% confidence interval (translucent and colored area) of the revenue achieved during training for each algorithms shown in the figures. To obtain revenue curves learned by the algorithms over time, the revenue at each point is estimated through 10,000 bids randomly drawn from the bid distribution and bidder response model.

{\bf Comparison of bid truncation and quantile truncation for perfect response models.}  We show the average revenue achieved by algorithms with bid truncation and quantile truncation for synthetic data with perfect response in Table~\ref{tab:synthetic-revenue-50} (first 50th rounds) and semi-synthetic data with perfect response in Table~\ref{tab:semi-synthetic-revenue-20} (first 20th rounds) and Table~\ref{tab:semi-synthetic-revenue-50} (first 50th rounds). We show that the bid truncation (Algorithm (IV)) performs the best in most of cases, but quantile truncation (Algorithm (V)) is also competitive and performs well for perfect response in both synthetic data and semi-synthetic data sets. Bid truncation doesn't significantly outperform than quantile truncation significantly, since the algorithms only take small number of the samples at each round and the quantile truncation is more stable in this setting.

{\bf Performance of all 20 semi-synthetic data sets with perfect response.}
We evaluate the performance of five algorithms for 20 semi-synthetic data sets with perfect response. Here we still repeat 50 trials for each algorithm and report the mean and 95\% confidence interval of the revenue. The results are summarized in Table~\ref{tab:semi-synthetic-revenue-20} and Table~\ref{tab:semi-synthetic-revenue-50}, where each table records the average revenue over the first 20 rounds and 50 rounds, respectively. The revenue is normalized by the optimal revenue of each data set (empirically evaluated by grid search). We find in semi-synthetic data, the variance reduction methods (bid truncation and quantile truncation) improve the revenue achieved by the algorithms. Interestingly, we find Naive gradient descent with bid truncation also works well in several semi-synthetic data sets, this is because our demand modeling approach relies on a good estimator $\hat{G}_D$ and sometimes, the simple neural network cannot learn the demand curve very accurately in the beginning.

{\bf No response model. } In no response model, bidders don't change their bids based on reserve prices, i.e $\forall r > 0, r\geq 0, \mathcal{B}_r(r') = \mathcal{B}_0(r')$. This can be regarded as a perfect response model with linear shading factor 1. In this case, we expect the algorithms converge to the lower bound of the reserve prices 0.1. For no response models, we set perturbation be $0.3$ to speed up convergence and we still use logistic regression to learn demand curve. Figure~\ref{fig:no-response} shows that all the algorithms almost converge to minimum reserve price 0.1 and the bid truncation method works the best for no response model. 

{\bf Mixture of no response and perfect response models. }
We also consider another non-perfect response model, which is a mixture of perfect response and no response models.  We assume the bidder will not respond to reserve price with probability 0.1 and use perfect response with probability 0.9. Figure~\ref{fig:mixture-response} shows the revenue curve learned by the algorithms for synthetic data and one semi-synthetic data. We find quantile-based variance reduction speed up the training and converges to optimal reserve faster than naive gradient descent. Since this is a mixture of perfect response and no response model, the revenue achieved by bid truncation methods is worse than the quantile-based approach. Through this experiments, we find quantile truncation is not sensitive with different response models, whereas, the bid truncation method is very sensitive to a slightly non-perfect response model. 

{\bf $\eps$-bounded response. }
In the experiments for $\eps$-bounded response model, we set $\eps=0.05$ and the bias term $z\sim \mathtt{Unif}[0, \eps]$ in $\eps$-bounded response model (see definition~\ref{def:eps-bounded-response-model}). We visualize the revenue learned by algorithms for synthetic data and semi-synthetic data in Figure~\ref{fig:eps-bounded-response}. The figures show that for $\eps$-bounded response, the quantile truncation still works better than naive gradient descent. Since we have demonstrated that bid truncation is sensitive to non-perfect response model, the performance of the bid truncation is worse than quantile truncation in $\eps$-bounded response.

\begin{figure*}[t]
\begin{subfigure}[b]{0.5\textwidth}
\centering
\includegraphics[scale=0.4]{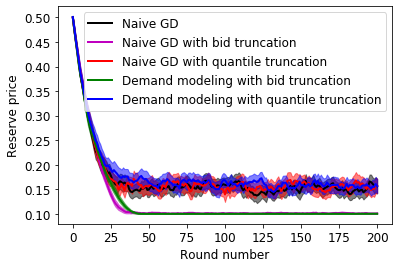}
\caption{Synthetic data with no response}
\label{fig:synthetic-no-response-revenue}
\end{subfigure}
\begin{subfigure}[b]{0.5\textwidth}
\centering
\includegraphics[scale=0.4]{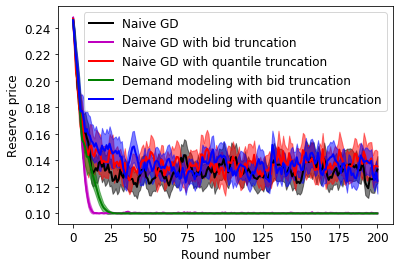}
\caption{Semi-synthetic data with no response}
\label{fig:dremel-no-response-revenue}
\end{subfigure}
\caption{Plots of reserve price as a function of round $t$ for (a) synthetic data with no response, and (b) one semi-synthetic data with no response.}
\label{fig:no-response}
\end{figure*}

\begin{figure*}[t]
\begin{subfigure}[b]{0.5\textwidth}
\centering
\includegraphics[scale=0.4]{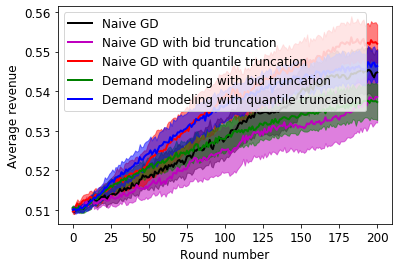}
\caption{Synthetic data with mixture response}
\label{fig:synthetic-mixture-response-revenue}
\end{subfigure}
\begin{subfigure}[b]{0.5\textwidth}
\centering
\includegraphics[scale=0.4]{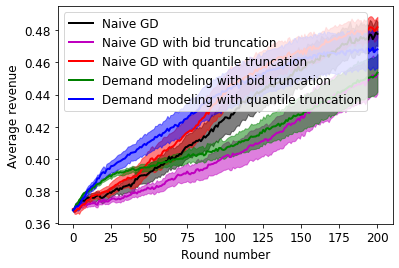}
\caption{Semi-synthetic data with mixture response}
\label{fig:semi-synthetic-mixture-response-revenue}
\end{subfigure}
\caption{Plots of reserve price and revenue as a function of round $t$ for (a) synthetic data and (b) one semi-synthetic data, with mixture response, where the bidder uses perfect response model with probability 0.9 and doesn't respond to the reserve, otherwise.}
\label{fig:mixture-response}
\end{figure*}

\begin{figure*}[t]
\begin{subfigure}[b]{0.5\textwidth}
\centering
\includegraphics[scale=0.4]{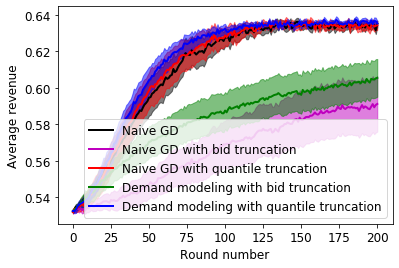}
\caption{Synthetic data with $\eps$-bounded response}
\label{fig:synthetic-eps-bounded-response}
\end{subfigure}
\begin{subfigure}[b]{0.5\textwidth}
\centering
\includegraphics[scale=0.4]{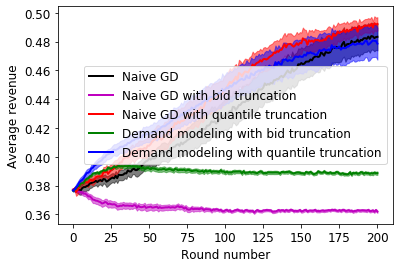}
\caption{Semi-synthetic data with $\eps$-bounded response}
\label{fig:semi-synthetic-eps-bounded-response}
\end{subfigure}
\caption{Plots of revenue as a function of round $t$ for (a) synthetic data with $\eps$-bounded response, and (b) one semi-synthetic data with $\eps$-bounded response, where $\eps=0.05$ and the bias term $z\sim \mathtt{Unif}[0, \eps]$.}
\label{fig:eps-bounded-response}
\end{figure*}

\begin{table}[ht]
\footnotesize
\centering
\begin{tabular}{|c|c|c|c|c|c|c|}
\hline
\multirow{2}{*}{Response model} & 
Algorithm (I) & Algorithm (II) & Algorithm (III) & Algorithm (IV) & Algorithm (V) 
\\
 & $\mathit{rev}$ & $\mathit{rev}$ & $\mathit{rev}$ & $\mathit{rev}$ & $\mathit{rev}$
\\
\hline
Perfect response&86.7$\pm$ 0.8\%&86.5$\pm$ 0.7\%&87.1$\pm$ 0.9\%&88.9$\pm$ 0.3\%& {\bf 89.4$\pm$ 0.6\%}
\\
\hline
Equilibrium response&95.5$\pm$ 0.5\%&90.1$\pm$ 0.2\%&96.3$\pm$ 0.4\%&93.5$\pm$ 0.6\%& {\bf 96.4$\pm$ 0.4\%}
\\
\hline
$\eps$-bounded response&86.4$\pm$ 0.8\%&83.0$\pm$ 0.6\%&86.6$\pm$ 1.0\%&85.1$\pm$ 0.8\%&{\bf 87.1$\pm$ 0.8\%}
\\
\hline
Mixture response &90.5$\pm$ 0.3\%&90.3$\pm$ 0.3\%&90.8$\pm$ 0.4\%&90.8$\pm$ 0.3\%&{\bf 91.0$\pm$ 0.4\%}
\\
\hline
\end{tabular}
\caption{Average revenue of the first 50th rounds of five algorithms for synthetic data with different response models. The revenue is normalized by the optimal revenue and we repeat 50 trials for each algorithm to report the 95\% confidence interval.
	\label{tab:synthetic-revenue-50}}
	\vspace*{-5pt}
\end{table}

\begin{table}[t]
\footnotesize
\centering
\begin{tabular}{|c|c|c|c|c|c|c|}
\hline
\multirow{2}{*}{Semi-synthetic data sets} & 
Algorithm (I) & Algorithm (II) & Algorithm (III) & Algorithm (IV) & Algorithm (V) 
\\
& $\mathit{rev}$ & $\mathit{rev}$ & $\mathit{rev}$ & $\mathit{rev}$ & $\mathit{rev}$
\\
\hline
(1)&73.9$\pm$ 0.4\%&73.8$\pm$ 0.2\%&74.1$\pm$ 0.4\%& {\bf 75.3$\pm$ 0.1\%} &75.3$\pm$ 0.3\%\\ \hline
(2)&91.8$\pm$ 0.8\%&92.9$\pm$ 0.3\%&92.9$\pm$ 0.6\%& {\bf 94.5$\pm$ 0.1\%} &94.3$\pm$ 0.6\%\\ \hline
(3)&94.1$\pm$ 0.2\%& {\bf 94.7$\pm$ 0.2\%} &93.5$\pm$ 0.3\%&94.4$\pm$ 0.1\%&93.3$\pm$ 0.3\%\\ \hline
(4)&82.9$\pm$ 0.4\%&83.4$\pm$ 0.0\%&82.9$\pm$ 0.3\%& {\bf 83.5$\pm$ 0.0\% }&83.2$\pm$ 0.2\%\\ \hline
(5)&93.0$\pm$ 0.6\%& {\bf 94.8$\pm$ 0.0\%} &93.4$\pm$ 0.4\%&94.4$\pm$ 0.0\%&93.9$\pm$ 0.4\%\\ \hline
(6)&93.8$\pm$ 0.7\%&91.8$\pm$ 0.5\%&94.0$\pm$ 0.6\%&95.6$\pm$ 0.1\%& {\bf 95.9$\pm$ 0.4\%}\\ \hline
(7)&86.4$\pm$ 0.4\%&85.1$\pm$ 0.4\%&87.4$\pm$ 0.5\%&87.9$\pm$ 0.1\%& {\bf 88.6$\pm$ 0.5\%} \\ \hline
(8)&95.0$\pm$ 0.7\%&96.0$\pm$ 0.2\%&95.3$\pm$ 0.6\%& {\bf 96.7$\pm$ 0.1\%} &95.6$\pm$ 0.3\%\\ \hline
(9)&83.6$\pm$ 1.0\%&84.0$\pm$ 0.5\%&84.9$\pm$ 0.9\%& {\bf 89.1$\pm$ 0.1\%} &88.6$\pm$ 1.1\%\\ \hline
(10)&94.3$\pm$ 0.8\%&94.8$\pm$ 0.3\%&94.9$\pm$ 0.8\%&{\bf 96.4$\pm$ 0.0\%} &95.9$\pm$ 0.5\%\\ \hline
(11)&52.0$\pm$ 1.1\%& {\bf 53.1$\pm$ 1.0\%} &52.2$\pm$ 1.6\%&52.1$\pm$ 1.0\%&51.2$\pm$ 1.6\%\\ \hline
(12)&86.6$\pm$ 0.7\%&86.9$\pm$ 0.3\%&87.9$\pm$ 0.7\%& {\bf 89.8$\pm$ 0.1\%} &89.6$\pm$ 0.5\%\\ \hline
(13)&89.7$\pm$ 1.5\%&89.8$\pm$ 0.8\%&89.1$\pm$ 1.5\%& {\bf 92.1$\pm$ 0.1\%} &91.2$\pm$ 1.0\%\\ \hline
(14)&93.3$\pm$ 0.6\%&93.0$\pm$ 0.4\%&93.1$\pm$ 0.7\%& {\bf 95.7$\pm$ 0.1\%} &95.2$\pm$ 0.7\%\\ \hline
(15)&96.2$\pm$ 0.7\%&95.0$\pm$ 0.5\%&96.3$\pm$ 0.7\%& {\bf 97.9$\pm$ 0.1\%} &97.8$\pm$ 0.3\%\\ \hline
(16)&88.4$\pm$ 0.8\%&87.8$\pm$ 0.4\%&88.6$\pm$ 0.8\%& {\bf 92.0$\pm$ 0.1\%} &92.1$\pm$ 0.7\%\\ \hline
(17)&94.6$\pm$ 0.6\%& {\bf 94.9$\pm$ 0.1\%} &93.9$\pm$ 0.7\%&92.5$\pm$ 0.1\%&91.6$\pm$ 0.6\%\\ \hline
(18)&91.0$\pm$ 0.6\%&90.9$\pm$ 0.3\%&90.8$\pm$ 0.6\%& {\bf 93.0$\pm$ 0.1\%} &92.7$\pm$ 0.4\%\\ \hline
(19)&83.4$\pm$ 0.6\%&83.4$\pm$ 0.4\%&84.7$\pm$ 0.6\%& {\bf 85.9$\pm$ 0.1\%} &85.8$\pm$ 0.6\%\\ \hline
(20)&94.7$\pm$ 0.9\%&{\bf 97.3$\pm$ 0.1\%} &95.6$\pm$ 0.7\%&93.9$\pm$ 0.1\%&92.2$\pm$ 0.4\%\\ \hline
\end{tabular}
\caption{Average revenue of first 20th rounds of five algorithms for semi-synthetic data sets with perfect response. The revenue is normalized by the optimal revenue of each data set. We repeat 50 trials to get the 95\% confidence interval.
	\label{tab:semi-synthetic-revenue-20}}
	\vspace*{-5pt}
\end{table}

\begin{table}[t]
\footnotesize
\centering
\begin{tabular}{|c|c|c|c|c|c|c|}
\hline
\multirow{2}{*}{Semi-synthetic data sets} & 
Algorithm (I) & Algorithm (II) & Algorithm (III) & Algorithm (IV) & Algorithm (V) 
\\
& $\mathit{rev}$ & $\mathit{rev}$ & $\mathit{rev}$ & $\mathit{rev}$ & $\mathit{rev}$
\\
\hline
(1)&75.1$\pm$ 0.6\%&74.5$\pm$ 0.3\%&75.7$\pm$ 0.6\%&76.8$\pm$ 0.2\%& {\bf 77.5$\pm$ 0.7\%}\\ \hline
(2)&93.4$\pm$ 0.5\%&94.1$\pm$ 0.2\%&94.4$\pm$ 0.4\%&94.8$\pm$ 0.1\%& {\bf 95.4$\pm$ 0.4\%} \\ \hline
(3)&93.9$\pm$ 0.2\%& {\bf 95.0$\pm$ 0.1\%} &93.3$\pm$ 0.2\%&94.9$\pm$ 0.1\%&93.4$\pm$ 0.3\%\\ \hline
(4)&83.3$\pm$ 0.3\%&83.7$\pm$ 0.0\%&83.3$\pm$ 0.3\%& {\bf 83.9$\pm$ 0.0\%} &83.7$\pm$ 0.1\%\\ \hline
(5)&95.7$\pm$ 0.3\%& {\bf 96.8$\pm$ 0.0\%} &95.9$\pm$ 0.2\%&96.4$\pm$ 0.0\%&96.3$\pm$ 0.2\%\\ \hline
(6)&95.2$\pm$ 0.4\%&93.7$\pm$ 0.5\%&95.4$\pm$ 0.4\%&{\bf 96.6$\pm$ 0.0\%} &96.6$\pm$ 0.2\%\\ \hline
(7)&88.8$\pm$ 0.5\%&87.1$\pm$ 0.5\%&90.0$\pm$ 0.6\%&90.0$\pm$ 0.1\%& {\bf 91.1$\pm$ 0.4\%}\\ \hline
(8)&97.3$\pm$ 0.3\%&98.1$\pm$ 0.1\%&97.0$\pm$ 0.4\%&{\bf 98.7$\pm$ 0.1\%}&97.4$\pm$ 0.2\%\\ \hline
(9)&91.5$\pm$ 0.8\%&91.9$\pm$ 0.6\%&92.8$\pm$ 0.8\%&{\bf 95.6$\pm$ 0.1\%}&95.3$\pm$ 0.7\%\\ \hline
(10)&96.4$\pm$ 0.5\%&97.0$\pm$ 0.2\%&96.9$\pm$ 0.4\%&{\bf 97.7$\pm$ 0.0\%}&97.4$\pm$ 0.2\%\\ \hline
(11)&74.4$\pm$ 0.7\%&75.1$\pm$ 0.6\%&73.3$\pm$ 2.3\%& {\bf 75.6$\pm$ 0.7\%} &73.5$\pm$ 2.6\%\\ \hline
(12)&89.2$\pm$ 0.6\%&89.3$\pm$ 0.3\%&90.5$\pm$ 0.6\%&91.6$\pm$ 0.1\%& {\bf 91.9$\pm$ 0.4\%}\\ \hline
(13)&95.6$\pm$ 0.7\%&96.0$\pm$ 0.3\%&94.8$\pm$ 0.9\%& {\bf 96.0$\pm$ 0.1\%} &94.8$\pm$ 0.3\%\\ \hline
(14)&94.3$\pm$ 0.5\%&94.2$\pm$ 0.3\%&94.6$\pm$ 0.5\%& {\bf 96.1$\pm$ 0.1\%} &96.1$\pm$ 0.5\%\\ \hline
(15)&97.4$\pm$ 0.5\%&97.0$\pm$ 0.3\%&97.4$\pm$ 0.3\%& {\bf 99.1$\pm$ 0.0\%} &98.3$\pm$ 0.1\%\\ \hline
(16)&91.8$\pm$ 0.7\%&90.4$\pm$ 0.5\%&92.2$\pm$ 0.8\%&93.9$\pm$ 0.2\%& {\bf 94.7$\pm$ 0.5\%} \\ \hline
(17)&96.9$\pm$ 0.2\%& {\bf 97.3$\pm$ 0.1\%} &96.3$\pm$ 0.3\%&94.1$\pm$ 0.0\%&93.7$\pm$ 0.2\%\\ \hline
(18)&92.1$\pm$ 0.4\%&92.0$\pm$ 0.2\%&92.5$\pm$ 0.5\%& {\bf 93.5$\pm$ 0.1\%} & {\bf 93.6$\pm$ 0.3\%}\\ \hline
(19)&84.7$\pm$ 0.6\%&84.5$\pm$ 0.5\%&86.1$\pm$ 0.6\%& {\bf 87.2$\pm$ 0.1\%} & {\bf 87.4$\pm$ 0.5\%}\\ \hline
(20)&97.0$\pm$ 0.5\%& {\bf 99.3$\pm$ 0.1\%} &96.9$\pm$ 0.6\%&97.1$\pm$ 0.1\%&95.1$\pm$ 0.2\%\\ \hline
\end{tabular}
\caption{Average revenue of first 50th rounds of five algorithms for semi-synthetic data sets with perfect response. The revenue is normalized by the optimal revenue of each data set. We repeat 50 trials to get the 95\% confidence interval.
\label{tab:semi-synthetic-revenue-50}}
\vspace*{-5pt}
\end{table}

\end{document}